	\providecommand{\keywords}[1]{\textbf{\textit{Keywords: }} #1}
	\title{\bf Minimum guesswork discrimination between quantum states}
	\author[1,2,3]{\normalsize Weien Chen\footnote{chenweienn@gmail.com}}
	\author[1,2]{Yongzhi Cao}
	\author[1,2]{Hanpin Wang}
	\author[3]{Yuan Feng\footnote{Yuan.Feng@uts.edu.au}}
		\affil[1]{\footnotesize Institute of Software, School of EECS, Peking University, China}
		\affil[2]{\footnotesize Key Laboratory of HCST, Ministry of Education, China}
		\affil[3]{\footnotesize Centre for QCIS, University of Technology, Sydney, Australia}
	\theoremstyle{plain}
        \newtheorem{mythm}{Theorem}
        \newtheorem{myprop}{Proposition}
        \newtheorem{mycoro}{Corollary}
        \newtheorem{mylemma}{Lemma}
	\theoremstyle{definition}
        \newtheorem{myexa}{Example}
    \theoremstyle{remark}
        \newtheorem{myrmk}{Remark}
\begin{document}

	\maketitle

	\begin{abstract}
		Error probability is a popular and well-studied optimization criterion in discriminating non-orthogonal quantum states. It captures the threat from an adversary who can only query the actual state once. However, when the adversary is able to use a brute-force strategy to query the state, discrimination measurement with minimum error probability does not necessarily minimize the number of queries to get the actual state. In light of this, we take Massey's guesswork as the underlying optimization criterion and study the problem of minimum guesswork discrimination. We show that this problem can be reduced to a semidefinite programming problem. Necessary and sufficient conditions when a measurement achieves minimum guesswork are presented. We also reveal the relation between minimum guesswork and minimum error probability. We show that the two criteria generally disagree with each other, except for the special case with two states. Both upper and lower information-theoretic bounds on minimum guesswork are given. For geometrically uniform quantum states, we provide sufficient conditions when a measurement achieves minimum guesswork. Moreover, we give the necessary and sufficient condition under which making no measurement at all would be the optimal strategy.
	\end{abstract}
	\keywords{quantum state discrimination, error probability, guesswork, brute-force strategy, information flow}		
	
	\section{Introduction}
	\noindent
		\label{sec_introduction}
		Since Helstrom's pioneering work on quantum binary decision problem~\cite{Helstrom1976}, quantum state discrimination has been extensively studied~\cite{Chefles2000,Barnett2009,Helstrom1976,Davies1978,Schumacher1995,Fuchs1996,BanYH1997,OsakiHB1998,Sasaki1999,Ivanovic1987,Dieks1988,Peres1988,EldarF2001,Holevo1973,Yuen1975,BanKMH1997,Barnett2001,EldarMV2004,BaeH2013}. The problem is usually described as a protocol between two parties, conventionally named Alice and Bob. Alice selects a quantum state $\rho$ from a set $\{\rho_{i}\}$ according to a probability distribution $\{p_i\}$ and gives it to Bob. We assume that Bob knows both the set of possible states and their associated probabilities. His aim is to identify the actual prepared state. To this end, Bob performs some quantum measurement on $\rho$ in order to extract information about the index $i$. This gives rise to an optimization problem with regard to Bob's choice of measurement. A number of criteria have been considered to concretize the meaning of this optimality~\cite{Helstrom1976,Davies1978,Schumacher1995,Fuchs1996,BanYH1997,OsakiHB1998,Sasaki1999,Ivanovic1987,Dieks1988,Peres1988,EldarF2001,Holevo1973,Yuen1975,BanKMH1997,Barnett2001,EldarMV2004,CrokeABGJ2006}, among which error probability and the Shannon mutual information are two representatives. While the former has led to a research line known as \emph{minimum error discrimination} (MED)~\cite{Holevo1973,Yuen1975,BanKMH1997,Barnett2001,EldarMV2004,BaeH2013}, the later corresponds to the study of \emph{accessible information}~\cite{Holevo1973b,JozsaRW1994,BanYH1997,OsakiHB1998,Sasaki1999}. Interestingly, as an alternative to MED, an unambiguous (error-less) scheme of state discrimination has been proposed, by allowing certain fraction of inconclusive measurement outcomes~\cite{Ivanovic1987,Dieks1988,Peres1988}.  

		We point out that quantum state discrimination can be seen as a special case of \emph{quantitative information flow} (QIF) analysis, which has been an active topic in security community during the last decades~\cite{Millen1987,Mclean1990,Gray1992,Lowe2002,KopfB2007,ChatziPP2008,Smith2009,Smith2011,Alvim2012,Alvim2014}. In QIF analysis, the aim is to quantify the amount of information leaked by a covert channel from a high-level entity, whose secret information (e.g., a password) is mathematically described as a random variable $X$ with alphabet $\{x_i\}$ and the associated probability distribution $\{p(x_i)\}$, to a low-level entity, whose partial information about $X$ is described as another random variable $Y$ with alphabet $\{y_j\}$. The correlation between $X$ and $Y$ is determined by the channel matrix $\{p(y_j|x_i)\}$ of the covert channel. To put quantum state discrimination in the context of QIF analysis, we may view Alice as the high-level entity and Bob the low-level entity. The only restriction is that the correlation between these two entities are ruled by quantum mechanics: Alice encodes her classical secret messages $\{x_i\}$ into quantum states $\{\rho_{x_i}\}$; Bob performs a measurement on Alice's prepared state $\rho_{x}$ to get information about $X$ and stores his measurement outcome in $Y$; the channel matrix is then given by the Born rule~\cite{Gleason1957}, $p(y_j|x_i) = \mathrm{Tr}(\rho_{x_i}\pi_{y_j})$, where $\pi_{y_j}$ is the measurement operator corresponding to the outcome $y_j$.
		
		In the literature of QIF analysis, researchers have proposed different figures of merit to quantify how successfully a low-level entity Bob can identify the secret value of $X$ given knowledge about $Y$, according to different adversarial strategies which Bob may adopt. In particular, it is well-known that error probability, guesswork, and the Shannon entropy deal with one-shot strategy, brute-force strategy, and subset membership strategy, respectively, and thus play important and complementary roles in QIF analysis~\cite{Cachin1997,Cover2006,Alvim2010}. In the quantum setting, it is clear that one-shot strategy and subset membership strategy have been considered. Error probability and the Shannon entropy have been widely studied in quantum information theory, and led a large amount of research on, besides MED and accessible information discussed above, quantum source coding~\cite{Schumacher1995,JozsaS1994}, quantum channel capacity~\cite{HausladenJSWW1996,Holevo1998,SchW1997}, etc. However, to the best of our knowledge, no work has addressed brute-force strategy in the context of quantum state discrimination.

		The above observation motivates us to consider Massey's guesswork~\cite{Massey1994} as the optimization criterion in quantum state discrimination. We name the new problem \emph{minimum guesswork discrimination} (MGD). In contrast to the MED scenario where Bob has only one chance to ask Alice ``is $\rho_x = \rho_{x_i}$'' for some $x_i$ chosen based on his measurement outcome, in this study Bob carries out multiple such queries until hitting Alice's prepared state $\rho_x$. Guesswork, the new criterion, quantifies the expected number of queries that Bob needs to make. We hope our preliminary step towards the study of brute-force strategy in the quantum setting will initiate a sibling direction of MED.

		The rest of this paper is organized as follows. In Section~\ref{sec_qgp}, we first review the classical guessing problem, then extend it to the quantum setting. We show that MGD can be reduced to a semidefinite programming (SDP) problem, and present necessary and sufficient conditions  which must be satisfied by the optimal measurement to achieve minimum guesswork. Section~\ref{sec_g_e} is devoted to the relation between the minimum error criterion and the minimum guesswork criterion. We provide both upper and lower information-theoretic bounds on minimum guesswork in Section~\ref{sec_bound}, and sufficient conditions when a measurement achieves minimum guesswork for geometrically uniform states in Section~\ref{sec_gu}. In Section~\ref{sec_n_m}, we answer the question ``when would making no measurement at all be the optimal strategy?'' by a sufficient and necessary condition on the quantum ensemble. We discuss several other interesting issues worthy of consideration in Section~\ref{sec_discussion} and conclude this work in Section~\ref{sec_conclusion}.

	\section{Quantum guessing problem}
	\noindent
		\label{sec_qgp}
		We first review the classical guessing problem. Then its quantum variant can be simply formalized by instantiating the classical problem with quantum information and mechanics. Suppose that Alice has a discrete random variable $X$ with alphabet $\mathcal{X} = \{x_i: 1\leq i\leq n\}$ and the associated probability distribution $\{\mathrm{Pr}(X=x_i) \triangleq p(x_i): 1\leq i\leq n\}$. Bob, who knows both the alphabet and the distribution, aims to identify the true value of $X$ by keeping on asking questions of the form ``is $X = x_i$ ?'' until getting the answer ``yes''. How many guesses is he expected to make? Massey~\cite{Massey1994} observed that Bob's optimal strategy for minimizing his work is to arrange his queries according to the non-increasing order of probabilities $p(x_i)$'s. Formally, the \emph{guesswork} of Bob is given by
			\begin{align}
				\label{eq_def_guesswork}
				\mathrm{G(X)} \triangleq \sum_{i=1}^n \sigma(i) p(x_i),
			\end{align}
		where $\sigma$ is a permutation on the index set $\{1,\cdots,n\}$ such that $p(x_i)\geq p(x_j)$ implies $\sigma(i)\leq \sigma(j)$. Recall that a permutation on set $S$ is just an one-to-one mapping from $S$ to itself. Here, $\sigma$ represents formally Bob's guessing strategy: he guesses $x_i$ in his $\sigma(i)$th query. The guesswork $\mathrm{G}(X)$ quantifies the expected number of queries that Bob needs to guess the actual value of $X$ when applying the strategy $\sigma$. The constraint on $\sigma$ ensures its optimality in that any other permutation yields greater or equal guesswork.

		The above definition of guesswork has been generalized to conditional version~\cite{Arikan1996}, which is more appealing in practice. In addition to Alice's alphabet $\mathcal{X}$ and the prior distribution on it, Bob may possess some extra knowledge (or side information) of $X$. In general, we assume a channel existing between Alice and Bob with input set $\mathcal{X}$ and output set $\mathcal{Y} = \{y_j: 1\leq j\leq m\}$. The probabilistic behavior of this channel is characterized in the standard way, i.e., by conditional probabilities $\{p(y_j|x_i): 1\leq i\leq n, 1\leq j\leq m \}$. Consequently, given some fixed input random variable $X$ of Alice, we can derive an output random variable $Y$ on Bob's side with the associated distribution obtained by
			\begin{align*}
				p(y_j) = \sum_{i=1}^n p(x_i) p(y_j|x_i),
			\end{align*}
		for each $1\leq j\leq m$. As usual, we denote the joint distribution of $X$ and $Y$ by $\{p(x_i,y_j)\}$.

		Now, instead of consulting the prior distribution of $X$ which leads to the guesswork $\mathrm{G}(X)$, Bob applies an optimal guessing strategy on each posterior distribution $\{p(x_i|y_j): 1\leq i\leq n\}$ when $y_j$ is observed. We denote each corresponding posterior guesswork by $\mathrm{G}(X|Y = y_j)$. Bob's conditional guesswork is then given by
			\begin{equation}
				\label{eq_def_cond_guesswork}
				\begin{split}
				\mathrm{G}(X|Y) &\triangleq \sum_{j=1}^m p(y_j) \mathrm{G}(X|Y = y_j) \\
								&= \sum_{j=1}^m p(y_j) \sum_{i=1}^n \sigma_j(i) p(x_i|y_j),
				\end{split}
			\end{equation}
		where each $\sigma_j$ is a permutation on $\{1,\cdots,n\}$ such that $p(x_i|y_j)\geq p(x_{i'}|y_j)$ implies $\sigma_j(i)\leq \sigma_j(i')$. In~\cite{Arikan1996}, Arikan showed that extra knowledge always reduces (at least preserves) guesswork, i.e., the inequality $\mathrm{G}(X|Y) \leq \mathrm{G}(X)$ holds.


		We now define the quantum guessing problem by extending the above scenario to the quantum setting. 
		Alice selects from her alphabet a secret message $x_i$ with probability $p(x_i)$ and encodes it into a (possibly mixed) quantum state $\rho_{x_i}$, which is accessible to Bob. Alice's operation gives rise to an ensemble of quantum states $\mathcal{E} = \{(p(x_i), \rho_{x_i})\}$ living in a finite dimensional Hilbert space $\mathcal{H}$. We call this $\mathcal{E}$ a \emph{quantum encoding} of $X$. In order to identify Alice's secret message, Bob performs on the quantum state a \emph{positive operator-valued measure} (POVM) $\Pi = \{\pi_{y_j}: 1\leq j\leq m\}$, which comprises $m$ positive semidefinite (PSD) operators satisfying the completeness condition $\sum_{j=1}^m \pi_{y_j} = I_{\mathcal{H}}$, where $I_{\mathcal{H}}$ is the identity matrix in $\mathcal{H}$.  The probability $p(y_j|x_i)$ that Bob obtains the $j$th measurement outcome when Alice chooses the $i$th message is given by $\mathrm{Tr}(\rho_{x_i}\pi_{y_j})$. Note that the random variable $Y$ is completely determined by the ensemble $\mathcal{E}$ and the POVM $\Pi$. Hence, minimizing over all possible POVMs leads to the following definition of \emph{minimum guesswork}: 
			\begin{align}
				\label{eq_def_min_guesswork}
				\mathrm{G}^{opt}(\mathcal{E}) \triangleq \min_{\Pi\in\mathcal{M}} \mathrm{G}(X|Y),
			\end{align}
		where $\mathcal{M}$ is the set of all POVMs. We name this minimization problem \emph{minimum guesswork discrimination} (MGD).
		
		For convenience of the following reasoning, we introduce some notations. Let $\mathcal{P}$ be the set of all non-zero PSD operators and $\mathcal{P}_1$ be the set of all non-zero rank-one PSD operators. A \emph{complete} POVM is a POVM comprising only rank-one PSD operators. The set of all complete POVMs is denoted by $\mathcal{M}_c$. Given $\pi\in\mathcal{P}$, the random variable $X_{\pi}$ which takes value from the alphabet of $X$ is defined by
			\begin{align*}
				\mathrm{Pr}(X_{\pi} = x_i) \triangleq \frac{ p(x_i)\mathrm{Tr}(\rho_{x_i}\pi) }{ \sum_{i=1}^n p(x_i)\mathrm{Tr}(\rho_{x_i}\pi) }.
			\end{align*}
		Intuitively, $X_{\pi}$ describes Bob's posterior distribution over Alice's messages when he obtains the outcome indicated by the measurement operator $\pi$. With this notation, the guesswork $\mathrm{G}(X|Y)$ can be rewritten as $\sum_{j=1}^m p(y_j)\mathrm{G}(X_{\pi_{y_j}})$ for some $\mathcal{E}$ and $\Pi$ as given in the preceding paragraph. Sometimes, we write $\mathrm{G}(X|\Pi)$ instead of $\mathrm{G}(X|Y)$ to indicate the specific POVM adopted by Bob.
		
		In the following, we give several alternative characterizations of $\mathrm{G}^{opt}(\mathcal{E})$. The first one states that the optimal POVM achieving $\mathrm{G}^{opt}(\mathcal{E})$ can always be taken as a complete measurement.
			\vspace*{12pt}
			\begin{myprop}
				\label{thm_cm}
				Let $\mathcal{E}$ be a quantum encoding of a random variable $X$, and $\mathrm{G}^{opt}(\mathcal{E})$ be defined in Eq.(\ref{eq_def_min_guesswork}). It holds that
					\begin{align*}
						\mathrm{G}^{opt}(\mathcal{E}) = \min_{\Pi\in\mathcal{M}_c} \mathrm{G}(X|Y).
					\end{align*}
			\end{myprop}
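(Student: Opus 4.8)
The plan is to show that any POVM $\Pi = \{\pi_{y_j}\}$ can be refined into a complete POVM without increasing the guesswork, which immediately yields the claim since $\mathcal{M}_c \subseteq \mathcal{M}$ gives the reverse inequality $\mathrm{G}^{opt}(\mathcal{E}) \leq \min_{\Pi\in\mathcal{M}_c}\mathrm{G}(X|Y)$ trivially. Fix an arbitrary $\pi_{y_j}\in\mathcal{P}$ and write its spectral decomposition $\pi_{y_j} = \sum_{k} \lambda_k |v_k\rangle\langle v_k|$ with $\lambda_k > 0$, so that $\pi_{y_j} = \sum_k \tilde\pi_k$ where each $\tilde\pi_k = \lambda_k |v_k\rangle\langle v_k|$ is rank-one PSD. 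Replacing $\pi_{y_j}$ by the collection $\{\tilde\pi_k\}_k$ (and leaving all other operators untouched) still satisfies the completeness condition, so we obtain a new POVM $\Pi'$; iterating over all $j$ produces a complete POVM. It therefore suffices to prove that this single splitting step does not increase $\mathrm{G}(X|Y)$.

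The key step is a convexity/averaging argument on the posterior guessworks. Observe that the posterior distribution $X_{\pi_{y_j}}$ is, up to normalization, a convex combination of the posteriors $X_{\tilde\pi_k}$: writing $w_k = \sum_i p(x_i)\mathrm{Tr}(\rho_{x_i}\tilde\pi_k)$ for the unnormalized weight of outcome $k$ and $w = \sum_k w_k = p(y_j)$, we have $\mathrm{Pr}(X_{\pi_{y_j}} = x_i) = \sum_k (w_k/w)\,\mathrm{Pr}(X_{\tilde\pi_k} = x_i)$. The contribution of outcome $y_j$ to the guesswork is $p(y_j)\,\mathrm{G}(X_{\pi_{y_j}}) = w\cdot\mathrm{G}(X_{\pi_{y_j}})$, while the combined contribution of the split outcomes is $\sum_k w_k\,\mathrm{G}(X_{\tilde\pi_k})$. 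Since $\mathrm{G}(X_\pi) = \min_\sigma \sum_i \sigma(i)\mathrm{Pr}(X_\pi = x_i)$ is a minimum of linear functionals of the distribution vector, it is concave in that vector; hence $\mathrm{G}(X_{\pi_{y_j}}) \geq \sum_k (w_k/w)\,\mathrm{G}(X_{\tilde\pi_k})$, i.e. $w\cdot\mathrm{G}(X_{\pi_{y_j}}) \geq \sum_k w_k\,\mathrm{G}(X_{\tilde\pi_k})$. Summing the unchanged contributions from the other outcomes, we conclude $\mathrm{G}(X|\Pi) \geq \mathrm{G}(X|\Pi')$, as desired.

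The main obstacle I anticipate is purely bookkeeping rather than conceptual: one must be careful that some $w_k$ may vanish (an outcome with zero probability), in which case the corresponding $X_{\tilde\pi_k}$ is undefined but also irrelevant, so such terms should be dropped before normalizing; and one should confirm that the concavity inequality is applied with the correct normalization so the $p(y_j)$ factors match up exactly. A secondary point worth stating explicitly is why $\mathrm{G}(X_\pi)$, defined via the probability-ordering permutation $\sigma$, equals $\min_\sigma \sum_i \sigma(i)\mathrm{Pr}(X_\pi = x_i)$ over all permutations — this is precisely Massey's observation quoted in the text (the sorted permutation is optimal by a rearrangement inequality), and it is what licenses the concavity claim. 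With these details in place the argument is complete.
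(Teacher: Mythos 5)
Your proposal is correct and follows essentially the same route as the paper: refine each measurement operator into rank-one pieces and show the guesswork cannot increase under splitting, where your concavity argument (guesswork as a minimum of linear functionals of the posterior) is just a repackaging of the paper's observation that the optimal permutation for $X_{\pi_{y_j}}$ need not be optimal for the split posteriors. Your explicit use of the spectral decomposition and of the trivial inclusion $\mathcal{M}_c\subseteq\mathcal{M}$ merely makes explicit what the paper leaves implicit.
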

			\begin{proof}
				Since $\mathrm{G}(X|Y)$ can be rewritten as $\sum_{j=1}^m p(y_j)\mathrm{G}(X_{\pi_{y_j}})$ for a POVM $\Pi = \{\pi_{y_j}\}$, it is sufficient to prove that for any $\pi_{y_j}\in\Pi$ the guesswork $\mathrm{G}(X|Y)$ cannot be increased by splitting $\pi_{y_j}$ into two measurement operators $\pi_{y_{j'}}$ and $\pi_{y_{j''}}$ such that $\pi_{y_j} = \pi_{y_{j'}} + \pi_{y_{j''}}$. Formally, we have the following inference:
					\begin{equation}
						\label{eq_proof_thm_cm}
						\begin{split}
						p(y_j)\mathrm{G}(X_{\pi_{y_{j}}}) & = p(y_j) \sum_{i=1}^n \sigma_j(i) \frac{p(x_i) \mathrm{Tr}(\rho_{x_i}\pi_{y_j})}{p(y_j)} \\
							& = \sum_{i=1}^n \sigma_j(i) p(x_i) \mathrm{Tr}(\rho_{x_i}\pi_{y_{j'}}) \\
							& \quad + \sum_{i=1}^n \sigma_j(i) p(x_i) \mathrm{Tr}(\rho_{x_i}\pi_{y_{j''}}) \\
							& \geq \sum_{i=1}^n \sigma_{j'}(i) p(x_i) \mathrm{Tr}(\rho_{x_i}\pi_{y_{j'}}) \\
							& \quad + \sum_{i=1}^n \sigma_{j''}(i) p(x_i) \mathrm{Tr}(\rho_{x_i}\pi_{y_{j''}}) \\
							& = p(y_{j'}) \mathrm{G}(X_{\pi_{y_{j'}}}) + p(y_{j''}) \mathrm{G}(X_{\pi_{y_{j''}}}),
						\end{split}
					\end{equation}
				where $p(y_j) = \sum_{i=1}^n p(x_i)\mathrm{Tr}(\rho_{x_i}\pi_{y_j})$ and the inequality is due to the fact that the permutation $\sigma_j$ may not necessarily be optimal with respect to $X_{\pi_{y_{j'}}}$ or $X_{\pi_{y_{j''}}}$. Therefore, from an optimal POVM $\Pi$ which achieves $\mathrm{G}^{opt}(\mathcal{E})$, one can always derive a complete POVM which also achieves $\mathrm{G}^{opt}(\mathcal{E})$.
			\end{proof}
		
		Another alternative characterization of $\mathrm{G}^{opt}(\mathcal{E})$ is suggested by the minimum error criterion used in MED. As shown in Remark~\ref{rmk_p} in Section~\ref{sec_g_e}, to minimize error probability it suffices to consider POVMs $\Pi$'s which satisfy $|\Pi| = n$ when $\mathcal{E}$ comprises $n$ states. Here, $|\Pi|$ denotes the number of measurement operators in $\Pi$. Similarly, in MGD, the following simple observation shows an $n!$ upper bound on $|\Pi|$ in minimizing guesswork. Suppose $\pi_1$ and $\pi_2$ are two measurement operators in $\Pi$. If the two posterior distributions induced by $\pi_1$ and $\pi_2$ achieve their minimum guesswork $\mathrm{G}(X_{\pi_1})$ and $\mathrm{G}(X_{\pi_2})$ by the same optimal permutation $\sigma$, then merging $\pi_1$ and $\pi_2$ into a single measurement operator $\pi_1+\pi_2$ would not change the overall performance of $\Pi$. Since there are exactly $n!$ different permutations on the index set $\{1,2,\cdots,n\}$, we can reduce $|\Pi|$ to $n!$ for an optimal POVM $\Pi$.
			\vspace*{12pt}
			\begin{myprop}
				\label{prop_n!}
				Let $\mathcal{E}$ be a quantum encoding of a random variable $X$, and $\mathrm{G}^{opt}(\mathcal{E})$ be defined in Eq.(\ref{eq_def_min_guesswork}). It holds that
					\begin{align*}
						\mathrm{G}^{opt}(\mathcal{E}) = \min_{\Pi\in\mathcal{M}_{n!}} \mathrm{G}(X|Y),
					\end{align*}
				where $\mathcal{M}_{n!}$ is the set of all POVMs consisting of exactly $n!$ measurement operators.
			\end{myprop}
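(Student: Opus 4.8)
The plan is to start from any optimal POVM, \emph{merge} operators that share a common optimal permutation so that at most $n!$ of them remain, and then \emph{split} operators to bring the count back up to exactly $n!$ without affecting the guesswork; together these two moves produce an optimal $\Pi\in\mathcal{M}_{n!}$, and since $\mathcal{M}_{n!}\subseteq\mathcal{M}$ the stated equality follows.

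First I would fix a POVM $\Pi=\{\pi_{y_j}\}$ achieving $\mathrm{G}^{opt}(\mathcal{E})$ (by Proposition~\ref{thm_cm} one may even take it complete, but this is not needed). For each $y_j$, choose one permutation $\sigma_j$ that is optimal for the posterior $X_{\pi_{y_j}}$, i.e. $p(x_i|y_j)\ge p(x_{i'}|y_j)$ whenever $\sigma_j(i)\le\sigma_j(i')$, breaking ties arbitrarily. Then group the operators by their assigned permutation: for each permutation $\tau$ of $\{1,\dots,n\}$ put $C_\tau=\{y_j:\sigma_j=\tau\}$ and $\pi'_\tau=\sum_{y_j\in C_\tau}\pi_{y_j}$, and define the new POVM $\Pi'=\{\pi'_\tau : C_\tau\neq\emptyset\}$. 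Completeness is preserved because $\sum_\tau\pi'_\tau=\sum_j\pi_{y_j}=I_{\mathcal{H}}$, and $|\Pi'|\le n!$ because there are only $n!$ permutations of $\{1,\dots,n\}$.

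The crux is to show $\mathrm{G}(X|\Pi')=\mathrm{G}(X|\Pi)$. I would argue that, since every $\pi_{y_j}$ with $y_j\in C_\tau$ has posterior probabilities ordered consistently with $\tau$, so does the posterior $X_{\pi'_\tau}$, being a normalised nonnegative combination of those posteriors; hence $\tau$ is itself an optimal permutation for $X_{\pi'_\tau}$. Using that each $\sigma_j=\tau$ and $\tau$ are all optimal, the contribution of $C_\tau$ to $\mathrm{G}(X|Y)$ can be rewritten, exactly as in Eq.~(\ref{eq_proof_thm_cm}) but with equality throughout, as $p'_\tau\,\mathrm{G}(X_{\pi'_\tau})$ with $p'_\tau=\sum_{y_j\in C_\tau}p(y_j)$. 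Summing over all $\tau$ then gives $\mathrm{G}(X|\Pi')=\mathrm{G}(X|\Pi)=\mathrm{G}^{opt}(\mathcal{E})$.

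Finally I would pad $\Pi'$ up to exactly $n!$ operators. Replacing any operator $\pi$ of a POVM by the pair $c\pi$, $(1-c)\pi$ with $0<c<1$ increases the operator count by one while leaving the guesswork unchanged, since $X_{c\pi}=X_{(1-c)\pi}=X_{\pi}$ and the corresponding weights split proportionally. Iterating this split starting from $\Pi'$, which is nonempty because $I_{\mathcal{H}}\neq 0$, until the count reaches $n!$ yields the desired optimal POVM in $\mathcal{M}_{n!}$. The only delicate point in the whole argument is the merging step — specifically the verification that a permutation which is optimal for several posteriors stays optimal for their mixture — while the splitting step and the remaining identities are routine bookkeeping.
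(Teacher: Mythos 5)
Your proposal is correct and takes essentially the same route as the paper, which reduces $|\Pi|$ by merging measurement operators whose posteriors share a common optimal permutation and notes there are only $n!$ such permutations. You merely make explicit two points the paper leaves as an observation: that an ordering common to several posteriors survives taking their normalised nonnegative combination (so the merged operator keeps the same optimal permutation and its guesswork contribution is unchanged), and the routine proportional-splitting step needed to reach exactly $n!$ operators.
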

			\vspace*{12pt}
		
		Based on Eldar et al.'s analogous results in MED~\cite{EldarMV2003}, we reduce MGD to an SDP problem, which has numerical solutions within any desired accuracy in mathematics, and derive necessary and sufficient conditions satisfied by the optimal POVM to achieve minimum guesswork. We present the results below and omit the proof which is simply an imitation of the reasoning in~\cite{EldarMV2003}.
			\vspace*{12pt}
			\begin{myprop}
				\label{prop_sdp}
				Let $\mathcal{E}$ be a quantum encoding of a random variable $X$, and $\mathrm{G}^{opt}(\mathcal{E})$ be defined in Eq.(\ref{eq_def_min_guesswork}). It holds that
					\begin{align*}
						\mathrm{G}^{opt}(\mathcal{E}) = \max_{A} \mathrm{Tr}(A),
					\end{align*}
				where $A$ ranges over all Hermitian operators in $\mathcal{H}$ satisfying $A\leq \sum_{i=1}^n\sigma(i)p(x_i)\rho_{x_i}$ for any permutation $\sigma$ on $\{1,\cdots,n\}$.
			\end{myprop}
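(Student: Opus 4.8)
The plan is to cast MGD as a semidefinite program, in the spirit of Eldar--Megretski--Verghese~\cite{EldarMV2003}, and then read off the statement as its Lagrange dual. By Proposition~\ref{prop_n!} an optimal POVM may be assumed to have exactly $n!$ operators, which I would index by the permutations of $\{1,\dots,n\}$. Writing $W_\sigma \triangleq \sum_{i=1}^n \sigma(i)p(x_i)\rho_{x_i}$ for a permutation $\sigma$, the first step is to prove
\begin{align*}
\mathrm{G}^{opt}(\mathcal{E}) = \min\Big\{\, \textstyle\sum_{\sigma}\mathrm{Tr}(W_\sigma\pi_\sigma)\ :\ \pi_\sigma\in\mathcal{P}\cup\{0\}\ \text{for all }\sigma,\ \textstyle\sum_\sigma\pi_\sigma = I_{\mathcal{H}}\,\Big\},
\end{align*}
the minimum running over families $\{\pi_\sigma\}$ indexed by permutations. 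For the inequality ``$\leq$'', any feasible family is a POVM $\Pi$, and labelling its outcome $\pi_\sigma$ by the (possibly non-optimal) permutation $\sigma$ can only over-count Bob's expected number of queries, so $\mathrm{G}^{opt}(\mathcal{E})\leq\mathrm{G}(X|\Pi)\leq\sum_\sigma\mathrm{Tr}(W_\sigma\pi_\sigma)$, exactly as in the proof of Proposition~\ref{thm_cm}. For the reverse inequality, start from an optimal POVM $\{\pi_{y_j}\}$ together with optimal permutations $\sigma_j$ for the posteriors $X_{\pi_{y_j}}$, and set $\tilde\pi_\sigma\triangleq\sum_{j:\sigma_j=\sigma}\pi_{y_j}$; this family is feasible and, using $p(y_j)\mathrm{G}(X_{\pi_{y_j}})=\mathrm{Tr}(W_{\sigma_j}\pi_{y_j})$, satisfies $\sum_\sigma\mathrm{Tr}(W_\sigma\tilde\pi_\sigma)=\mathrm{G}^{opt}(\mathcal{E})$.

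Given this reformulation, I would attach a Hermitian multiplier $A$ to the equality constraint $\sum_\sigma\pi_\sigma=I_{\mathcal{H}}$ and form the Lagrangian
\begin{align*}
\mathrm{Tr}(A)+\textstyle\sum_\sigma\mathrm{Tr}\big((W_\sigma-A)\pi_\sigma\big).
\end{align*}
Minimising over positive semidefinite $\pi_\sigma$ yields $-\infty$ unless $W_\sigma-A\geq 0$ for every $\sigma$, in which case the value is $\mathrm{Tr}(A)$; hence the dual program is exactly $\max_A\mathrm{Tr}(A)$ subject to $A\leq W_\sigma$ for all permutations $\sigma$, which is the right-hand side of the proposition. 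Weak duality already gives $\max_A\mathrm{Tr}(A)\leq\mathrm{G}^{opt}(\mathcal{E})$.

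The remaining, and main, point is the absence of a duality gap, and this is where I would be most careful. The primal program is strictly feasible: the family $\pi_\sigma=I_{\mathcal{H}}/n!$ meets the equality constraint with every $\pi_\sigma$ positive definite, so Slater's condition applies; since the objective is linear and the feasible set is compact, the primal optimum is attained and strong duality holds with the dual optimum attained as well, giving $\mathrm{G}^{opt}(\mathcal{E})=\max_A\mathrm{Tr}(A)$. If one prefers to avoid invoking a general SDP duality theorem, the self-contained complementary-slackness argument of~\cite{EldarMV2003} transfers after replacing their weights $p(x_i)\rho_{x_i}$ by the guesswork weights $\sigma(i)p(x_i)\rho_{x_i}$, and it additionally produces the optimality conditions $(W_\sigma-A)\pi_\sigma=0$ that characterise the optimal POVM. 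The only genuinely new bookkeeping relative to MED is the reformulation step above --- checking that passing to exactly $n!$ outcomes, allowing $\pi_\sigma=0$, and merging outcomes sharing an optimal permutation are all legitimate --- but Propositions~\ref{thm_cm} and~\ref{prop_n!} are precisely what license these moves.
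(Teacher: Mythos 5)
Your proposal is correct and follows essentially the route the paper intends: it recasts MGD as the permutation-indexed primal SDP (justified by Propositions~\ref{thm_cm} and~\ref{prop_n!}) and obtains the stated program as its dual, which is exactly the imitation of the Eldar--Megretski--Verghese argument~\cite{EldarMV2003} that the paper cites in place of a written proof. Your explicit verification of the reformulation, weak duality, and strong duality via Slater's condition (strict feasibility of $\pi_\sigma = I_{\mathcal{H}}/n!$, compactness for primal attainment, dual attainment for the $\max$) supplies precisely the details the paper omits, and the complementary-slackness conditions you note are the content of Proposition~\ref{prop_ns}.
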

			\begin{myprop}
			\vspace*{12pt}
				\label{prop_ns}
				Let $\mathcal{E}$ be a quantum encoding of a random variable $X$, and $\mathrm{G}^{opt}(\mathcal{E})$ be defined in Eq.(\ref{eq_def_min_guesswork}). A POVM $\{\pi_{y_1},\pi_{y_2},\cdots,\pi_{y_{n!}}\}$ achieves $\mathrm{G}^{opt}(\mathcal{E})$ if and only if for any permutation $\sigma$ on $\{1,\cdots,n\}$ it holds that
					\begin{align*}
						\sum_{i=1}^n\sum_{j=1}^{n!} \sigma_j(i)p(x_i)\rho_{x_i}\pi_{y_j} \leq \sum_{i=1}^n \sigma(i)p(x_i)\rho_{x_i}.
					\end{align*}
			\end{myprop}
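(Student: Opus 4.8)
The plan is to follow the semidefinite-programming template that Eldar, Megretski and Verghese use for MED~\cite{EldarMV2003}: recast MGD (restricted, via Proposition~\ref{prop_n!}, to POVMs with $n!$ operators) as a minimum-error-type SDP, identify its Lagrange dual with Proposition~\ref{prop_sdp}, and then read the optimality conditions off complementary slackness. For a permutation $\sigma$ on $\{1,\dots,n\}$ write $T_\sigma \triangleq \sum_{i=1}^n \sigma(i)\,p(x_i)\rho_{x_i}$. The first step is the reduction: if a POVM is indexed by the $n!$ permutations as $\{\pi_\sigma\}$ and the outcome $\pi_\sigma$ is committed to the guessing order $\sigma$, its cost is $\sum_\sigma \mathrm{Tr}(T_\sigma\pi_\sigma)$; since any POVM may be relabelled so that each of its operators carries its own optimal permutation, and since conversely $p(y_\sigma)\mathrm{G}(X_{\pi_\sigma}) = \min_\tau \mathrm{Tr}(T_\tau\pi_\sigma) \le \mathrm{Tr}(T_\sigma\pi_\sigma)$, one gets $\mathrm{G}^{opt}(\mathcal{E}) = \min\{\sum_\sigma \mathrm{Tr}(T_\sigma\pi_\sigma): \pi_\sigma \ge 0,\ \sum_\sigma \pi_\sigma = I_{\mathcal{H}}\}$ (some $\pi_\sigma$ may vanish). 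This is exactly an MED-type SDP with the $n!$ weighted operators $T_\sigma$ in place of $p_i\rho_i$, and its Lagrange dual is $\max\{\mathrm{Tr}(A): A \text{ Hermitian},\ A \le T_\sigma \text{ for all } \sigma\}$, i.e. Proposition~\ref{prop_sdp}. Both programs are strictly feasible ($\pi_\sigma = I_{\mathcal{H}}/n!$ on the primal side, $A = cI_{\mathcal{H}}$ with $c < \min_\sigma \lambda_{\min}(T_\sigma)$ on the dual side), so Slater's condition delivers strong duality with attainment on both sides.

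For the ``if'' direction I would argue directly, without duality. Assume the displayed inequality, i.e. $W \triangleq \sum_{i,j}\sigma_j(i)p(x_i)\rho_{x_i}\pi_{y_j} = \sum_j T_{\sigma_j}\pi_{y_j} \le T_\sigma$ for every $\sigma$, where each $\sigma_j$ is an optimal permutation for $X_{\pi_{y_j}}$. Using $p(y_j)\mathrm{G}(X_{\pi_{y_j}}) = \mathrm{Tr}(T_{\sigma_j}\pi_{y_j})$, the guesswork of $\{\pi_{y_j}\}$ equals $\mathrm{Tr}(W)$. For an arbitrary POVM $\{\pi'_k\}$ with optimal permutations $\tau_k$, $\mathrm{G}(X|\{\pi'_k\}) = \sum_k \mathrm{Tr}(T_{\tau_k}\pi'_k) \ge \sum_k \mathrm{Tr}(W\pi'_k) = \mathrm{Tr}(W)$, the inequality following from $T_{\tau_k} - W \ge 0$ together with $\pi'_k \ge 0$ (which in particular forces $W$ to be Hermitian). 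Hence $W$ witnesses the optimality of $\{\pi_{y_j}\}$.

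For the ``only if'' direction I would invoke strong duality. Let $\{\pi_{y_j}\}$ be optimal and relabel it as $\{\pi_\sigma\}$ with $\pi_\sigma = \sum_{j:\sigma_j=\sigma}\pi_{y_j}$; then $\sum_\sigma\mathrm{Tr}(T_\sigma\pi_\sigma) = \sum_j\mathrm{Tr}(T_{\sigma_j}\pi_{y_j}) = \mathrm{G}(X|\{\pi_{y_j}\}) = \mathrm{G}^{opt}(\mathcal{E})$, so $\{\pi_\sigma\}$ is primal-optimal, and $\sum_\sigma T_\sigma\pi_\sigma = W$. Let $A^\star$ be dual-optimal. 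Strong duality gives $\sum_\sigma \mathrm{Tr}\big((T_\sigma - A^\star)\pi_\sigma\big) = \mathrm{Tr}(W) - \mathrm{Tr}(A^\star) = 0$; each summand is nonnegative, hence vanishes, and $T_\sigma - A^\star \ge 0$, $\pi_\sigma \ge 0$ then yield $(T_\sigma - A^\star)\pi_\sigma = 0$, i.e. $T_\sigma\pi_\sigma = A^\star\pi_\sigma$, for every $\sigma$. Summing over $\sigma$ and using $\sum_\sigma \pi_\sigma = I_{\mathcal{H}}$ gives $W = \sum_\sigma T_\sigma\pi_\sigma = A^\star$, so dual feasibility of $A^\star$ reads $W \le T_\sigma$ for every $\sigma$ — precisely the claimed condition.

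The only step that needs genuine care, rather than bookkeeping, is the reduction in the first paragraph: one must verify that the ``committed-permutation'' SDP really computes $\mathrm{G}^{opt}(\mathcal{E})$ (including the harmless vanishing of some operators among the $n!$) and that the Slater conditions hold for the weighted operators $T_\sigma$, which are PSD but not normalized. Everything afterwards — writing the dual, complementary slackness, and the telescoping $\sum_\sigma T_\sigma\pi_\sigma = A^\star$ — is routine, and the Hermiticity of $W$ is subsumed by the positivity requirement already present in the statement.
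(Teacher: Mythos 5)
Your proposal is correct and takes essentially the approach the paper itself intends: the paper omits the proof of Proposition~\ref{prop_ns}, describing it as an imitation of the SDP reasoning of Eldar, Megretski and Verghese, and your argument---reducing MGD to the committed-permutation SDP with operators $T_\sigma$, identifying its dual with Proposition~\ref{prop_sdp}, and extracting the condition via Slater's condition and complementary slackness (with the merging $\pi_\sigma=\sum_{j:\sigma_j=\sigma}\pi_{y_j}$ and the direct witness argument for sufficiency)---is exactly that imitation, carried out correctly.
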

			\vspace*{12pt}
		It is worth noting that Proposition~\ref{prop_ns} can also be proved directly using the technique introduced in~\cite{BarnettC2009}.

	\section{Relation between MGD and MED}
	\noindent
		\label{sec_g_e}
		Given two random variables $X$ and $Y$, the unconditional and conditional error probabilities of guessing the value of $X$ are defined respectively by
			\begin{equation}
				\label{eq_def_p}
				\mathrm{P}_{err}(X)  \triangleq 1 - \max_{1\leq i\leq n} p(x_i),
			\end{equation}
		and
			\begin{equation}
				\label{eq_def_cond_p}
				\begin{split}
				\mathrm{P}_{err}(X|Y) & \triangleq \sum_{j=1}^m p(y_j) \mathrm{P}_{err}(X|Y=y_j) \\
									  & = 1 - \sum_{j=1}^m p(y_j) \max_{1\leq i\leq n} p(x_i|y_j).
				\end{split}
			\end{equation}
		The underlying observation here is that to minimize the error probability, Bob should always guess the most probable message according to his prior or posterior distributions. In addition, if $X$ and $Y$ are correlated by some quantum encoding $\mathcal{E}$ and some POVM $\Pi$ as in the preceding section, Bob's error probability can be written as
			\begin{align*}
				\mathrm{P}_{err}(X|Y) = 1 - \sum_{j=1}^m \max_{1\leq i\leq n} p(x_i) \mathrm{Tr}(\rho_{x_i}\pi_{y_j}),
			\end{align*}
		because of the fact that $p(y_j|x_i) = \mathrm{Tr}(\rho_{x_i}\pi_{y_j})$. The definition of minimum error probability in MED is then given by
			\begin{align}
				\label{eq_def_min_p}
				\mathrm{P}^{opt}_{err}(\mathcal{E}) \triangleq \min_{\Pi\in\mathcal{M}} \mathrm{P}_{err}(X|Y).
			\end{align}

			\vspace*{12pt}
		\begin{myrmk}
			\label{rmk_p}
			It is worth noting that in the literature of MED, error probability is usually formulated by 
				\begin{align*}
					\mathrm{P}'_{err}(X|Y) \triangleq 1- \sum_{i=1}^n p(x_i) \mathrm{Tr}(\rho_{x_i}\pi_{y_i}),
				\end{align*}
			with the POVM $\Pi$ being restricted to comprise exactly $n$ measurement operators.
			It is easy to show that this simplified characterization is equivalent to $\mathrm{P}_{err}(X|Y)$ in the sense that
				\begin{align*}
					\min_{\Pi\in\mathcal{M}_n} \mathrm{P}'_{err}(X|Y) = \min_{\Pi\in\mathcal{M}} \mathrm{P}_{err}(X|Y),
				\end{align*}
			where $\mathcal{M}_n$ is the set of all POVMs consisting of exactly $n$ measurement operators.
		\end{myrmk}
			\vspace*{12pt}

		Now, we are ready to investigate the relation between $\mathrm{G}^{opt}(\mathcal{E})$ and $\mathrm{P}^{opt}_{err}(\mathcal{E})$. To this end, we start by examining the two notions $\mathrm{G}(X)$ and $\mathrm{P}_{err}(X)$ in classic setting.

			\vspace*{12pt}
			\begin{mylemma}[\cite{Santis2001}, Lemma~2.4]
				\label{lem_gp1}
				Let $\mathrm{G}(X)$ and $\mathrm{P}_{err}(X)$ be defined in Eq.(\ref{eq_def_guesswork}) and Eq.(\ref{eq_def_p}), respectively. It holds that
					\begin{align}
						\label{eq_lem_gp1}
						\mathrm{G}(X) \leq \frac{n}{2}\cdot \mathrm{P}_{err}(X) + 1.
					\end{align}
			\end{mylemma}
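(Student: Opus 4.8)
The plan is to reduce Eq.(\ref{eq_lem_gp1}) to a clean statement about the ``tail'' of the probability vector and then control that tail by an averaging argument. First I would relabel the indices so that $p(x_1)\ge p(x_2)\ge\cdots\ge p(x_n)$; this changes neither side of the inequality and makes the identity permutation optimal, so $\mathrm{G}(X)=\sum_{i=1}^n i\,p(x_i)$ while $\mathrm{P}_{err}(X)=1-p(x_1)=\sum_{i=2}^n p(x_i)$. Using $\sum_i p(x_i)=1$ to peel off Bob's first guess gives $\mathrm{G}(X)=1+\sum_{i=2}^n (i-1)p(x_i)$, so the claim becomes equivalent to $\sum_{i=2}^n (i-1)p(x_i)\le \frac n2\sum_{i=2}^n p(x_i)$.

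Next I would rewrite the left-hand side as a sum of tail sums. Setting $S_k\triangleq\sum_{i=k}^n p(x_i)$, one has $\sum_{i=2}^n(i-1)p(x_i)=\sum_{k=2}^n S_k$, since $p(x_i)$ contributes to exactly the $i-1$ tails $S_2,\dots,S_i$. The key step, which I expect to carry the whole argument, is the bound $S_k\le\frac{n-k+1}{n-1}\,S_2$: because the list is non-increasing, $p(x_k),\dots,p(x_n)$ are the $n-k+1$ smallest among the $n-1$ numbers $p(x_2),\dots,p(x_n)$, so their sum is at most their proportional share $\frac{n-k+1}{n-1}$ of the total $S_2$. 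This is the only non-bookkeeping ingredient, and it amounts to the elementary fact that the sum of the $m$ smallest of $N$ nonnegative reals is at most $\frac mN$ of the grand total; I would isolate it as a one-line observation.

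Summing the bound over $k=2,\dots,n$ then finishes the job, since $\sum_{k=2}^n S_k\le\frac{S_2}{n-1}\sum_{k=2}^n (n-k+1)=\frac{S_2}{n-1}\cdot\frac{n(n-1)}{2}=\frac n2 S_2$, and hence $\mathrm{G}(X)=1+\sum_{k=2}^n S_k\le 1+\frac n2 S_2=\frac n2\,\mathrm{P}_{err}(X)+1$. I would also record that every inequality used is tight exactly when $p(x_2)=\cdots=p(x_n)$, so equality in Eq.(\ref{eq_lem_gp1}) holds precisely for those distributions (in particular the uniform one, and trivially whenever $n\le 2$). An alternative route is induction on $n$, peeling off $p(x_n)$ and renormalizing the remaining weights, but the tail-sum computation seems more transparent and avoids the messier bookkeeping of the inductive step; the only real subtlety in either approach is the averaging inequality for $S_k$ noted above.
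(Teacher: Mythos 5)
Your proof is correct. Note that the paper itself offers no proof of this lemma: it is quoted from De Santis et al.\ (Lemma~2.4 of the cited reference), and the paper only adds a remark about when equality holds, so there is no in-paper argument to compare yours against. Your reduction to $\sum_{i=2}^n(i-1)p(x_i)\le \tfrac n2\sum_{i=2}^n p(x_i)$ after sorting, the rewriting as a sum of tail sums $S_k$, and the averaging bound $S_k\le\frac{n-k+1}{n-1}S_2$ (the $m$ smallest of $N$ nonnegative reals sum to at most $\frac mN$ of the total, valid since $p(x_k),\dots,p(x_n)$ are the smallest among $p(x_2),\dots,p(x_n)$) are all sound, and the final summation $\sum_{k=2}^n\frac{n-k+1}{n-1}=\frac n2$ checks out. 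Your equality characterization, $p(x_2)=\cdots=p(x_n)$ with $p(x_1)$ maximal, agrees exactly with the condition $p(x_i)=(1-p(x_1))/(n-1)$ stated in the paper right after the lemma, which is a good consistency check. This gives a clean, self-contained elementary proof of a fact the paper only imports by citation.
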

			\vspace*{12pt}
		If we assume without loss of generality that $p(x_1) = \max_{i}p(x_i)$, then Eq.(\ref{eq_lem_gp1}) achieves equality when $p(x_i) = (1-p(x_1))/(n-1)$ for $2\leq i\leq n$. It is straightforward to prove the conditional counterpart of Eq.(\ref{eq_lem_gp1}).
			\vspace*{12pt}
			\begin{mycoro}
				\label{coro_gp1}
				Let $\mathrm{G}(X|Y)$ and $\mathrm{P}_{err}(X|Y)$ be defined in Eq.(\ref{eq_def_cond_guesswork}) and Eq.(\ref{eq_def_cond_p}), respectively. It holds that
					\begin{align*}
						\mathrm{G}(X|Y) \leq \frac{n}{2}\cdot \mathrm{P}_{err}(X|Y) + 1.
					\end{align*}
			\end{mycoro}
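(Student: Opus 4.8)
The plan is to derive the conditional inequality by applying Lemma~\ref{lem_gp1} pointwise to each posterior distribution and then averaging. Concretely, fix an outcome $y_j$ with $p(y_j) > 0$. The posterior probabilities $\{p(x_i|y_j): 1\leq i\leq n\}$ form a bona fide probability distribution on the alphabet $\mathcal{X}$, so Lemma~\ref{lem_gp1} applies to it verbatim, yielding
	\begin{align*}
		\mathrm{G}(X|Y=y_j) \leq \frac{n}{2}\cdot\mathrm{P}_{err}(X|Y=y_j) + 1,
	\end{align*}
where $\mathrm{G}(X|Y=y_j)$ is exactly the posterior guesswork $\sum_{i=1}^n\sigma_j(i)p(x_i|y_j)$ appearing in Eq.(\ref{eq_def_cond_guesswork}) and $\mathrm{P}_{err}(X|Y=y_j) = 1 - \max_i p(x_i|y_j)$ is the posterior error probability from Eq.(\ref{eq_def_cond_p}).

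Next I would multiply the displayed inequality by $p(y_j)\geq 0$ and sum over $j = 1,\dots,m$. The left-hand side becomes $\sum_j p(y_j)\mathrm{G}(X|Y=y_j)$, which is $\mathrm{G}(X|Y)$ by definition; the first term on the right becomes $\frac{n}{2}\sum_j p(y_j)\mathrm{P}_{err}(X|Y=y_j) = \frac{n}{2}\cdot\mathrm{P}_{err}(X|Y)$; and the constant term contributes $\sum_j p(y_j) = 1$. Assembling these gives
	\begin{align*}
		\mathrm{G}(X|Y) \leq \frac{n}{2}\cdot\mathrm{P}_{err}(X|Y) + 1,
	\end{align*}
which is the claim. (Outcomes with $p(y_j)=0$ contribute nothing to either average and may be discarded at the outset.)

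As for obstacles: there really are none of substance — the only thing to be careful about is that the two conditional quantities in the statement are defined precisely as the $p(y_j)$-weighted averages of their posterior versions, so the summation step is purely bookkeeping and the inequality is preserved termwise because all weights are nonnegative. This is why the excerpt calls the result "straightforward." If one wanted the tightness remark to carry over as well, one would additionally note that equality holds iff, for every $y_j$ with $p(y_j)>0$, the posterior distribution meets the equality condition of Lemma~\ref{lem_gp1}, i.e. is flat on its non-maximal atoms.
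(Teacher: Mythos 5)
Your proof is correct and is exactly the argument the paper has in mind: the paper omits the proof as ``straightforward,'' and the intended route is precisely the pointwise application of Lemma~\ref{lem_gp1} to each posterior distribution followed by averaging with weights $p(y_j)$ (here, unlike Corollary~\ref{coro_gp2}, no Jensen step is needed because the bound is affine in $\mathrm{P}_{err}$). Nothing to add.
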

			\vspace*{12pt}
		Interestingly, guesswork can also be lower bounded in terms of error probability.
			\vspace*{12pt}
			\begin{mylemma}
				\label{lem_gp2}
				Let $\mathrm{G}(X)$ and $\mathrm{P}_{err}(X)$ be defined in Eq.(\ref{eq_def_guesswork}) and Eq.(\ref{eq_def_p}), respectively. It holds that
					\begin{align}
						\label{eq_lem_gp2}
						\mathrm{G}(X) \geq \frac{1}{2( 1 - \mathrm{P}_{err}(X) )} + \frac{1}{2},
					\end{align}
				with the equality achieved when $k$ probabilities in $\{p(x_i)\}$ are equal to $1/k$ for some integer $k\ (1\leq k\leq n)$ and the other probabilities all equal to zero.
			\end{mylemma}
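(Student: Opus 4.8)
The plan is to collapse the whole statement onto the single quantity $p_{\max} := \max_{i} p(x_i)$, since Eq.~(\ref{eq_def_p}) gives $1-\mathrm{P}_{err}(X) = p_{\max}$, so that Eq.~(\ref{eq_lem_gp2}) is equivalent to $\mathrm{G}(X) \ge \frac{1}{2p_{\max}} + \frac12$. Relabel the messages so that $p(x_1)\ge p(x_2)\ge\cdots\ge p(x_n)$; then the identity permutation is optimal and $\mathrm{G}(X) = \sum_{i=1}^n i\,p(x_i)$. The first step is to rewrite the guesswork through tail sums: using $i=\sum_{j=1}^i 1$ and swapping the order of summation yields $\mathrm{G}(X) = \sum_{j=1}^{n} T_j$, where $T_j := \sum_{i=j}^n p(x_i) = 1 - \sum_{i=1}^{j-1}p(x_i)$. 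Since every $p(x_i)\le p_{\max}$, this gives the key bound $T_j \ge 1-(j-1)p_{\max}$, together with the trivial $T_j\ge 0$.

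Next I would set $k := \lceil 1/p_{\max}\rceil$ (note $k\le n$ because $p_{\max}\ge 1/n$). If $k=1$ then $p_{\max}=1$ and both sides of the claimed inequality equal $1$, so assume $k\ge 2$; then $(k-1)p_{\max}<1$, hence $1-(j-1)p_{\max}>0$ for every $1\le j\le k$. Dropping the non-negative terms $T_{k+1},\dots,T_n$ and applying the bound above to $T_1,\dots,T_k$ gives
\[
\mathrm{G}(X)\ \ge\ \sum_{j=1}^{k}\bigl(1-(j-1)p_{\max}\bigr)\ =\ k-\frac{k(k-1)}{2}\,p_{\max}.
\]

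It then remains to check the elementary inequality $k-\frac{k(k-1)}{2}x \ge \frac{1}{2x}+\frac12$ for $x=p_{\max}$ in the range $\frac1k\le x<\frac1{k-1}$ determined by $k=\lceil 1/x\rceil$. Here I would introduce $h(x):=k-\frac12-\frac{k(k-1)}{2}x-\frac{1}{2x}$, verify by direct substitution that $h(1/k)=h(1/(k-1))=0$, and note that $h''(x)=-x^{-3}<0$, so $h$ is concave; a concave function that vanishes at both endpoints of an interval is non-negative on it, which closes the inequality. For the equality characterization I would trace the two inequalities backwards: equality in the tail-sum step forces $p(x_1)=\cdots=p(x_{k-1})=p_{\max}$, $p(x_k)=1-(k-1)p_{\max}$, and $p(x_j)=0$ for $j>k$, while equality in the concavity step forces $p_{\max}$ to be an endpoint of the interval. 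The value $1/(k-1)$ is excluded because it would contradict $k=\lceil 1/p_{\max}\rceil$, so only $p_{\max}=1/k$ remains, and then $p(x_k)=1/k$ too; thus exactly $k$ probabilities equal $1/k$ and the others vanish, and the converse computation $\mathrm{G}(X)=\frac1k\cdot\frac{k(k+1)}{2}=\frac{k+1}{2}=\frac{1}{2p_{\max}}+\frac12$ confirms this is indeed the equality case.

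The only real obstacle is the final elementary inequality: the concavity observation is what keeps it painless, and without it one is pushed into a clumsier case analysis on the fractional part of $1/p_{\max}$. The rest — the tail-sum identity, the pointwise bound $p(x_i)\le p_{\max}$, and the equality bookkeeping — is routine.
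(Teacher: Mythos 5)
Your proof is correct, and it follows the same skeleton as the paper's: fix $p_{\max}=1-\mathrm{P}_{err}(X)$, lower-bound $\mathrm{G}(X)$ by the guesswork of the ``as uniform as possible'' distribution ($k-1$ masses equal to $p_{\max}$ plus one remainder), and then verify a one-variable inequality in $p_{\max}$. The differences are in how the two steps are executed. The paper simply asserts that, with $p(x_1)$ fixed, the guesswork is minimized by making as many probabilities as possible equal to $p(x_1)$, and then writes down $\mathrm{G}(X)\geq\sum_{i=1}^{k}i\,p(x_1)+(k+1)(1-kp(x_1))$ with $k=\lfloor 1/p(x_1)\rfloor$; your tail-sum identity $\mathrm{G}(X)=\sum_j T_j$ with $T_j\geq\max\{0,\,1-(j-1)p_{\max}\}$ proves that same intermediate bound rigorously rather than by an informal extremality claim, which is a genuine gain in tightness of argument (the two expressions coincide, floor versus ceiling being only a reparametrization). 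For the final scalar inequality the paper substitutes $t=1/p(x_1)-k$ and reduces to $t\geq t^2$ on $0\leq t<1$, whereas you check vanishing at the endpoints $1/k$ and $1/(k-1)$ and invoke strict concavity of $h$; these are equivalent in content, with yours slightly slicker and the paper's more computational. A bonus of your route is that tracing equality through the chain yields the \emph{necessity} of the ``$k$ masses equal to $1/k$'' configuration, while the lemma (and the paper's proof, via $t=0$) only claims sufficiency. One tiny point worth stating explicitly if you write this up: strict concavity ($h''<0$) is what guarantees $h>0$ on the open interval, which is exactly what your endpoint argument for the equality case uses.
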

			\begin{proof}
				Without loss of generality, we assume that $p(x_1) \geq \cdots \geq p(x_n)$. Given $p(x_1)$ fixed, in order to minimize $\mathrm{G}(X)$ we need to require as many probabilities $p(x_i)$'s being equal to $p(x_1)$ as possible. It follows that
					\begin{align*}
						\mathrm{G}(X) & \geq \sum_{i=1}^{k} i\cdot p(x_1) + (k + 1)\cdot (1 - p(x_1)\cdot k) \\
									& = -\frac{p(x_1)}{2} k^2 - \frac{p(x_1)}{2} k + k + 1 \\
									& \geq \frac{1}{2 p(x_1)} + \frac{1}{2} \\
									& = \frac{1}{2(1-\mathrm{P}_{err}(X))} + \frac{1}{2},
					\end{align*}
				where $k = \left\lfloor \frac{1}{p(x_1)} \right\rfloor$. The only non-trivial part of the above reasoning is the second inequality. To prove it, let $t = 1/p(x_1) - k$. Then, it is equivalent to prove that
					\begin{align*}
						-\frac{k^2}{2(k+t)} - \frac{k}{2(k+t)} + k + 1 \geq \frac{k+t}{2} + \frac{1}{2},
					\end{align*}
				which can be further simplified to $t\geq t^2$. Since $0\leq t < 1$, we conclude the proof. When $t = 0$, thus $\left\lfloor\frac{1}{p(x_1)}\right\rfloor = \frac{1}{p(x_1)}$, the equality is achieved.
			\end{proof}
		Again, we derive the conditional counterpart of Eq.(\ref{eq_lem_gp2}).
			\vspace*{12pt}
			\begin{mycoro}
				\label{coro_gp2}
				Let $\mathrm{G}(X|Y)$ and $\mathrm{P}_{err}(X|Y)$ be defined in Eq.(\ref{eq_def_cond_guesswork}) and Eq.(\ref{eq_def_cond_p}), respectively. It holds that
					\begin{align}
						\label{eq_coro_gp2}
						\mathrm{G}(X|Y) \geq \frac{1}{2(1- \mathrm{P}_{err}(X|Y) )} + \frac{1}{2}.
					\end{align}
			\end{mycoro}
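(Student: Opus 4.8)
The plan is to derive the conditional inequality from the unconditional Lemma~\ref{lem_gp2}, in the same spirit as Corollary~\ref{coro_gp1} is obtained from Lemma~\ref{lem_gp1}, but with one extra convexity argument. First I would fix an outcome $y_j$ and apply Lemma~\ref{lem_gp2} to the posterior distribution $\{p(x_i|y_j): 1\leq i\leq n\}$, which yields
\[
	\mathrm{G}(X|Y=y_j) \geq \frac{1}{2\left(1-\mathrm{P}_{err}(X|Y=y_j)\right)} + \frac{1}{2}.
\]
Multiplying both sides by $p(y_j)$, summing over $j$, and recalling the definitions in Eq.(\ref{eq_def_cond_guesswork}) and Eq.(\ref{eq_def_cond_p}), I obtain
\[
	\mathrm{G}(X|Y) \geq \frac{1}{2}\sum_{j=1}^m \frac{p(y_j)}{1-\mathrm{P}_{err}(X|Y=y_j)} + \frac{1}{2}.
\]

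The second step is to compare the averaged term on the right with $\tfrac{1}{2(1-\mathrm{P}_{err}(X|Y))}$. Writing $c_j \triangleq 1-\mathrm{P}_{err}(X|Y=y_j) = \max_{1\leq i\leq n} p(x_i|y_j) > 0$, Eq.(\ref{eq_def_cond_p}) gives $1-\mathrm{P}_{err}(X|Y) = \sum_{j=1}^m p(y_j)\,c_j$, that is, $1-\mathrm{P}_{err}(X|Y)$ is exactly the expectation of $c_j$ under the distribution $\{p(y_j)\}$. Since $x\mapsto 1/x$ is convex on $(0,\infty)$, Jensen's inequality yields $\sum_{j} p(y_j)/c_j \geq 1\big/\sum_{j} p(y_j)c_j = 1/(1-\mathrm{P}_{err}(X|Y))$. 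Substituting this into the previous display gives Eq.(\ref{eq_coro_gp2}). The equality condition is then inherited from both ingredients: each posterior $\{p(x_i|y_j)\}$ must be uniform on a support of some size $k_j$ (and zero elsewhere), as in Lemma~\ref{lem_gp2}, and additionally all the values $c_j = 1/k_j$ must coincide so that Jensen is tight.

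I do not expect a serious obstacle here; the content is purely classical, exactly parallel to Corollary~\ref{coro_gp1}. The only point worth flagging is that a naive term-by-term application of Lemma~\ref{lem_gp2} followed by averaging is not by itself sufficient — one still needs the convexity of $1/x$ to pass from the average of $1/(1-\mathrm{P}_{err}(X|Y=y_j))$ to $1/(1-\mathrm{P}_{err}(X|Y))$ — so the ``hard part'', such as it is, is recognizing that this Jensen step is the one doing the work.
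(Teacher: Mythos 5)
Your proof is correct and follows essentially the same route as the paper: apply Lemma~\ref{lem_gp2} to each posterior distribution $\{p(x_i|y_j)\}$, average with weights $p(y_j)$, and then use Jensen's inequality for the convex map $x\mapsto 1/x$ to pass from the average of $1/(1-\mathrm{P}_{err}(X|Y=y_j))$ to $1/(1-\mathrm{P}_{err}(X|Y))$, which is exactly the "second inequality from Jensen's inequality" in the paper's argument. The only additions on your side (the explicit equality discussion and the flag about the Jensen step) are harmless refinements, not a different method.
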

			\begin{proof}
				We have the following inference:
					\begin{align*}
						\mathrm{G}(X|Y) & = \sum_{j=1}^m p(y_j) \mathrm{G}(X|Y=y_j) \\
										& \geq \sum_{j=1}^m p(y_j) (\frac{1}{2( 1 - \mathrm{P}_{err}(X|Y=y_j) )} + \frac{1}{2}) \\
										& \geq \frac{1}{2( 1 - \sum_{j=1}^m p(y_j)\mathrm{P}_{err}(X|Y=y_j) )} + \frac{1}{2} \\
										& = \frac{1}{2( 1 - \mathrm{P}_{err}(X|Y) )} + \frac{1}{2},
					\end{align*}
				where the second inequality is from Jensen's inequality.
			\end{proof}

		We observe that there is an even stronger connection between guesswork and error probability for the special case where $n=2$.
			\vspace*{12pt}
			\begin{mylemma}
				\label{lem_gp3}
				If the alphabet of $X$ comprises exactly two elements, i.e., $n=2$, then it holds that
					\begin{itemize}
						\item[(i)]
							$	\mathrm{G}(X) = \mathrm{P}_{err}(X) + 1$;
						\item[(ii)]
							$	\mathrm{G}(X|Y) = \mathrm{P}_{err}(X|Y) + 1$.
					\end{itemize}
			\end{mylemma}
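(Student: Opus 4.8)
The plan is to prove part~(i) by direct computation and then obtain part~(ii) as an immediate consequence by averaging over the outcomes $y_j$.

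For part~(i), I would start by assuming without loss of generality that $p(x_1) \geq p(x_2)$, so that $\mathrm{P}_{err}(X) = 1 - p(x_1) = p(x_2)$. The optimal permutation $\sigma$ in Eq.(\ref{eq_def_guesswork}) then satisfies $\sigma(1) = 1$ and $\sigma(2) = 2$, whence
\begin{align*}
\mathrm{G}(X) = 1\cdot p(x_1) + 2\cdot p(x_2) = p(x_1) + p(x_2) + p(x_2) = 1 + p(x_2) = 1 + \mathrm{P}_{err}(X).
\end{align*}
(If instead $p(x_2) > p(x_1)$ the same argument applies with the roles of the two indices swapped, giving $\mathrm{G}(X) = 1 + p(x_1) = 1 + \mathrm{P}_{err}(X)$.) Note that this identity is nothing but the $n=2$ boundary case of Lemma~\ref{lem_gp1} and Lemma~\ref{lem_gp2}: when $n=2$ both the upper bound $\tfrac{n}{2}\mathrm{P}_{err}(X)+1$ and, in the relevant range, the lower bound collapse to $\mathrm{P}_{err}(X)+1$.

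For part~(ii), I would apply part~(i) to each posterior distribution $\{p(x_i\mid y_j): i=1,2\}$, which is again a distribution on a two-element alphabet, obtaining $\mathrm{G}(X\mid Y=y_j) = \mathrm{P}_{err}(X\mid Y=y_j) + 1$ for every $j$. Multiplying by $p(y_j)$ and summing over $j$, and using the definitions Eq.(\ref{eq_def_cond_guesswork}) and Eq.(\ref{eq_def_cond_p}) together with $\sum_{j=1}^m p(y_j) = 1$, yields
\begin{align*}
\mathrm{G}(X\mid Y) = \sum_{j=1}^m p(y_j)\bigl(\mathrm{P}_{err}(X\mid Y=y_j) + 1\bigr) = \mathrm{P}_{err}(X\mid Y) + 1.
\end{align*}

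There is essentially no obstacle here; the only point requiring a line of care is the tie-breaking in the choice of $\sigma$ when $p(x_1) = p(x_2) = 1/2$, but in that case both admissible permutations give $\mathrm{G}(X) = 3/2 = 1/2 + 1 = \mathrm{P}_{err}(X)+1$, so the identity holds regardless of which optimal permutation is selected. The result is therefore immediate once one writes out the two terms of the guesswork sum.
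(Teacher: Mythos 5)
Your proof is correct, and it is exactly the direct-from-definitions computation that the paper itself deems easy enough to omit: with $p(x_1)\geq p(x_2)$ one has $\mathrm{G}(X)=p(x_1)+2p(x_2)=1+p(x_2)=1+\mathrm{P}_{err}(X)$, and averaging over $y_j$ gives (ii). One small caveat: your parenthetical remark that the lower bound of Lemma~\ref{lem_gp2} also collapses to $\mathrm{P}_{err}(X)+1$ when $n=2$ is not accurate in general (it is tight only when the two probabilities are $\{1/2,1/2\}$ or $\{1,0\}$), but this aside plays no role in your argument, which stands on the direct computation alone.
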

			\vspace*{12pt}
			
		We omit the proof of this lemma, because it is easy from the definitions of guesswork and error probability. Based on the preceding results, we obtain the relation between $\mathrm{G}^{opt}(\mathcal{E})$ and $\mathrm{P}^{opt}_{err}(\mathcal{E})$ as follows.
			\vspace*{12pt}
			\begin{mythm}
				\label{thm_gp}
				Let $\mathcal{E}$ be a quantum encoding of a random variable $X$, and $\mathrm{G}^{opt}(\mathcal{E})$ and $\mathrm{P}^{opt}_{err}(\mathcal{E})$ be defined in Eq.(\ref{eq_def_min_guesswork}) and Eq.(\ref{eq_def_min_p}), respectively. It holds that
						\begin{align}
							\label{eq_thm_gp}
							\frac{1}{2(1-\mathrm{P}^{opt}_{err}(\mathcal{E}))} + \frac{1}{2} \leq \mathrm{G}^{opt}(\mathcal{E}) \leq \frac{n}{2}\mathrm{P}^{opt}_{err}(\mathcal{E}) + 1
						\end{align}
				and if $n=2$, 
							\begin{align}
								\label{eq_thm_gp2}
								\mathrm{G}^{opt}(\mathcal{E}) = \mathrm{P}^{opt}_{err}(\mathcal{E}) + 1.
							\end{align}
			\end{mythm}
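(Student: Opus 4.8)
The plan is to bootstrap the conditional inequalities already established — Corollary~\ref{coro_gp1} and Corollary~\ref{coro_gp2} — from the level of a fixed measurement to the level of the optimal measurements, and then read off the $n=2$ identity directly from Lemma~\ref{lem_gp3}(ii). The key observation to keep in mind throughout is that the POVM minimizing error probability is not \emph{a priori} the one minimizing guesswork, so the two bounds in~\eqref{eq_thm_gp} must be obtained by evaluating the corollaries at two different optimal measurements.

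For the upper bound I would let $\Pi^{\star}$ be a POVM attaining the minimum in~\eqref{eq_def_min_p}, and let $X,Y$ be the random variables jointly induced by $\mathcal{E}$ and $\Pi^{\star}$. Because $\mathrm{G}^{opt}(\mathcal{E})$ is a minimum over all POVMs we have $\mathrm{G}^{opt}(\mathcal{E}) \le \mathrm{G}(X|\Pi^{\star})$, and Corollary~\ref{coro_gp1} bounds the latter by $\tfrac{n}{2}\mathrm{P}_{err}(X|\Pi^{\star}) + 1 = \tfrac{n}{2}\mathrm{P}^{opt}_{err}(\mathcal{E}) + 1$. For the lower bound I would instead pick $\Pi^{\star\star}$ attaining the minimum in~\eqref{eq_def_min_guesswork}; Corollary~\ref{coro_gp2} gives $\mathrm{G}^{opt}(\mathcal{E}) = \mathrm{G}(X|\Pi^{\star\star}) \ge \tfrac{1}{2(1-\mathrm{P}_{err}(X|\Pi^{\star\star}))} + \tfrac12$, and since $\mathrm{P}_{err}(X|\Pi^{\star\star}) \ge \mathrm{P}^{opt}_{err}(\mathcal{E})$ while $t \mapsto \tfrac{1}{2(1-t)} + \tfrac12$ is nondecreasing on $[0,1)$ — and $\mathrm{P}^{opt}_{err}(\mathcal{E}) \le \mathrm{P}_{err}(X) < 1$, so the expressions are finite — this last quantity is at least $\tfrac{1}{2(1-\mathrm{P}^{opt}_{err}(\mathcal{E}))} + \tfrac12$. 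Combining the two displays gives~\eqref{eq_thm_gp}.

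The case $n=2$ is immediate from Lemma~\ref{lem_gp3}(ii): that identity $\mathrm{G}(X|\Pi) = \mathrm{P}_{err}(X|\Pi) + 1$ holds for \emph{every} POVM $\Pi$, so taking the minimum over $\mathcal{M}$ on both sides and using that translation by $1$ commutes with $\min$ yields~\eqref{eq_thm_gp2}, with the extra bonus that one and the same POVM is then optimal for both criteria. I do not expect a genuine obstacle here — all the analytic work sits in the classical lemmas and their conditional corollaries, and what remains is bookkeeping about interchanging minima with monotone functions. The one point I would be careful to state rather than wave at is the \emph{existence} of the minimizers $\Pi^{\star}$ and $\Pi^{\star\star}$: this follows from compactness of $\mathcal{M}$ (or of $\mathcal{M}_{n!}$ via Proposition~\ref{prop_n!}) together with continuity of $\Pi \mapsto \mathrm{P}_{err}(X|\Pi)$ and $\Pi \mapsto \mathrm{G}(X|\Pi)$, the latter being a finite minimum over permutations of functions linear in the operators $\pi_{y_j}$ and hence continuous.
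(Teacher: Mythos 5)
Your proposal is correct and follows essentially the same route as the paper's proof: the upper bound is obtained by evaluating Corollary~\ref{coro_gp1} at a POVM optimal for $\mathrm{P}^{opt}_{err}(\mathcal{E})$, the lower bound by evaluating Corollary~\ref{coro_gp2} at a POVM optimal for $\mathrm{G}^{opt}(\mathcal{E})$ together with monotonicity of $t\mapsto \frac{1}{2(1-t)}+\frac{1}{2}$, and the $n=2$ identity from Lemma~\ref{lem_gp3}(ii). Your explicit remarks on the monotonicity step and on existence of the minimizers are details the paper leaves implicit, but they do not change the argument.
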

			\begin{proof}
				We prove the left inequality in Eq.(\ref{eq_thm_gp}). Let $\Pi$ be an optimal POVM achieving $\mathrm{G}^{opt}(\mathcal{E})$ and $Y$ the corresponding random variable induced by $\mathcal{E}$ and $\Pi$. Applying Eq.(\ref{eq_coro_gp2}), we have the following inference:
					\begin{align*}
						\mathrm{G}^{opt}(\mathcal{E}) & = \mathrm{G}(X|Y) \\
												& \geq \frac{1}{2(1-\mathrm{P}_{err}(X|Y))} + \frac{1}{2} \\
												& \geq \frac{1}{2(1-\mathrm{P}^{opt}_{err}(\mathcal{E}))} + \frac{1}{2}.
					\end{align*}
				Eq.(\ref{eq_thm_gp2}) and the right inequality in Eq.(\ref{eq_thm_gp}) follows from similar reasoning but using optimal POVM achieving $\mathrm{P}_{err}^{opt}(\mathcal{E})$ instead.
			\end{proof}
		This theorem states that the minimum error criterion coincides with the minimum guesswork criterion in the context of two state discrimination. However, regarding to general cases, the two criteria may not agree with each other, though there still exists a weak correlation between them as shown in Eq.(\ref{eq_thm_gp}).
		The following example shows their disagreement when three states are presented.
			\begin{myexa}
				\label{exa_trine}
				We consider the so-called \emph{trine ensemble}~\cite{HausladenW1994,Peres1990}, which consists of three equiprobable pure qubit states (living in a $2$-dimensional Hilbert space), given by
					\begin{align*}
						|\phi_{x_1}\rangle &= |0\rangle, \\
						|\phi_{x_2}\rangle &= - \frac{1}{2} |0\rangle + \frac{\sqrt{3}}{2} |1\rangle, \\
						|\phi_{x_3}\rangle &= - \frac{1}{2} |0\rangle - \frac{\sqrt{3}}{2} |1\rangle.
					\end{align*}
				For the trine ensemble, the $3$-component POVM, denoted by $\Pi^E$, with each operator given by
					\begin{align*}
						\pi_{y_i} = \frac{2}{3} \rho_{x_i} = \frac{2}{3} |\phi_{x_i}\rangle\langle\phi_{x_i}|,
					\end{align*}
				achieves the minimum error with $\mathrm{P}_{err}(X|\Pi^{E}) = 1/3$. This POVM is known as the \emph{square-root measurement}~\cite{HausladenW1994}. On the other hand, in Section~\ref{sec_gu} we will show that the POVM $\Pi^G = \{ 2/3 |\psi_{x_k}\rangle\langle\psi_{x_k}|: 1\leq k\leq 3 \}$ with
					\begin{align*}
						|\psi_{x_k}\rangle = \cos(\frac{2k\pi}{3}-\frac{7\pi}{12})|0\rangle + \sin(\frac{2k\pi}{3} - \frac{7\pi}{12}) |1\rangle
					\end{align*}
				achieves the minimum guesswork with $\mathrm{G}(X|\Pi^{G}) = 2 - \sqrt{3}/3$. It is also easy to verify that 
					\begin{align*}
						\mathrm{P}_{err}(X|\Pi^G) &= \frac{2}{3} - \frac{\sqrt{3}}{6} > \mathrm{P}_{err}(X|\Pi^E), \\
						\mathrm{G}(X|\Pi^E) &= \frac{3}{2} > \mathrm{G}(X|\Pi^G),
					\end{align*}
				where we write $\Pi^E$ or $\Pi^G$ for $Y$ to avoid confusion. Indeed, by observing the proof of the optimality of $\Pi^G$ in Section \ref{sec_gu}, we can conclude that for the trine ensemble there does not exist any POVM which can achieve both minimum error and minimum guesswork.
			\end{myexa}


	\section{Upper and lower information-theoretic bounds on minimum guesswork}
	\noindent
		\label{sec_bound}
		As a matter of fact, the optimization problem in MED is usually hard to solve analytically. Closed-form result or optimal measurement is only known for some special quantum systems, e.g., the case with exactly two states~\cite{Helstrom1976}, equiprobable symmetric states~\cite{BanKMH1997}, or multiply symmetric states~\cite{Barnett2001}. We believe that it is also the case for MGD, because of the analogy between these two problems. In light of this, upper/lower bounds on minimum guesswork are as desirable as those on minimum error~\cite{Barnum2002,Montanaro2007,Montanaro2008,Qiu2008,QiuL2010}. In what follows, we start by reviewing some existing upper/lower bounds on guesswork in classic setting. We then combine them with the celebrated \emph{Holevo bound} and the less well-known \emph{subentropy bound} on accessible information, resulting in upper/lower bounds on minimum guesswork in the quantum setting.

		In~\cite{Massey1994}, Massey proved the following lower bound on guesswork $\mathrm{G}(X)$.
			\vspace*{12pt}
			\begin{mylemma}[\cite{Massey1994}] 
				\label{lem_lower_bound}
				Let $\mathrm{G}(X)$ be defined in Eq.(\ref{eq_def_guesswork}). Provided $\mathrm{H}(X) \geq 2$, it holds that
					\begin{align*}
						\mathrm{G}(X) \geq \frac{1}{4} \cdot 2^{\mathrm{H}(X)} + 1.
					\end{align*}
			\end{mylemma}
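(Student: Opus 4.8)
The plan is to bound $\mathrm{H}(X)$ from above by the entropy of the maximum‑entropy distribution having the same mean, and then invert this relation. Assume without loss of generality that $p(x_1)\ge p(x_2)\ge\cdots\ge p(x_n)$, so that the identity permutation is optimal and $\mathrm{G}(X)=\sum_{i=1}^n i\,p(x_i)$. Viewing $\{p(x_i)\}$ as a probability distribution $p$ on the positive integers (with $p(x_i)=0$ for $i>n$), $\mathrm{G}(X)$ is precisely its mean $\mu$ and $\mathrm{H}(X)$ its Shannon entropy. First I would compare $p$ with the geometric distribution $q$ of the same mean, $q_i=(1-a)a^{\,i-1}$ with $a=(\mu-1)/\mu$, which is well defined for every $\mu\ge 1$. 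Since $-\log_2 q_i$ is affine in $i$, the cross‑entropy $-\sum_i p_i\log_2 q_i$ depends on $p$ only through its mean; hence the Gibbs inequality $D(p\|q)\ge 0$ yields $\mathrm{H}(X)\le -\sum_i p_i\log_2 q_i=-\sum_i q_i\log_2 q_i=:f(\mu)$, and a one‑line computation gives $f(\mu)=\mu\log_2\mu-(\mu-1)\log_2(\mu-1)$.

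Next I would record the elementary properties of $f$ on $(1,\infty)$: it is strictly increasing, since $f'(\mu)=\log_2\frac{\mu}{\mu-1}>0$; moreover $f(1)=0$, $f(2)=2$, and $f(\mu)\to\infty$, so $f$ is a bijection onto $(0,\infty)$. Consequently $\mathrm{H}(X)\le f(\mathrm{G}(X))$ is equivalent to $\mathrm{G}(X)\ge f^{-1}(\mathrm{H}(X))$, and the Lemma reduces to the purely analytic claim
\[
  f^{-1}(h)\ \ge\ \tfrac14\,2^{h}+1 \qquad (h\ge 2),
\]
or, applying the increasing map $f$, to $f\!\left(\tfrac14\,2^{h}+1\right)\le h$ for $h\ge 2$.

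To finish I would substitute $x:=\tfrac14\,2^{h}=2^{\,h-2}$, so that $x\ge 1$ precisely when $h\ge 2$ and $h=2+\log_2 x$. Writing out $f(x+1)=(x+1)\log_2(x+1)-x\log_2 x$ and simplifying, the target inequality collapses to
\[
  (x+1)\log_2\!\Big(1+\tfrac1x\Big)\ \le\ 2 , \qquad x\ge 1 .
\]
Setting $g(x)=\tfrac1{\ln 2}\,(x+1)\ln(1+1/x)$, one has $g'(x)=\tfrac1{\ln 2}\big(\ln(1+1/x)-1/x\big)<0$ because $\ln(1+t)<t$ for $t>0$; since $g(1)=2$, it follows that $g(x)\le 2$ for all $x\ge 1$, and chaining the inequalities yields $\mathrm{G}(X)\ge\tfrac14\,2^{\mathrm{H}(X)}+1$.

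The step that needs care is the maximum‑entropy comparison and its hypotheses: one must note that the geometric law of mean $\mu=\mathrm{G}(X)$ always exists (any $\mu\ge 1$) and that replacing our finitely supported $p$ by the infinite‑support comparison $q$ is harmless, since adjoining zero‑probability atoms changes neither mean nor entropy. Everything after that is single‑variable calculus, and it is no accident that the threshold $\mathrm{H}(X)\ge 2$ and the constant $\tfrac14$ appear together: both originate from the equality point $x=1$, i.e.\ $\mathrm{G}(X)=\mathrm{H}(X)=2$, which is approached by the geometric distribution $p(x_i)=2^{-i}$.
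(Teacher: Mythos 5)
Your proof is correct, and it is essentially the argument behind the cited result: the paper states this lemma without proof, attributing it to Massey, whose original derivation is exactly your maximum-entropy comparison with the geometric distribution of mean $\mathrm{G}(X)$ followed by inverting $\mu\mapsto \mu\log\mu-(\mu-1)\log(\mu-1)$. The only cosmetic caveat is the degenerate case $\mu=1$ (point mass), which is excluded anyway by the hypothesis $\mathrm{H}(X)\geq 2$.
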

			\vspace*{12pt}
		Note that $\mathrm{H}(X) \triangleq - \sum_{i=1}^n p(x_i)\log p(x_i)$ is the Shannon entropy. The logarithm is taken with base $2$. Let us also recall the conditional Shannon entropy:
			\begin{align*}
				\mathrm{H}(X|Y) & \triangleq \sum_{j=1}^m p(y_j)\mathrm{H}(X|Y = y_j), \\
								& = - \sum_{j=1}^m p(y_j) \sum_{i=1}^n p(x_i|y_j) \log p(x_i|y_j).
			\end{align*}
		Based on Massey's result, we derive a similar lower bound on conditional guesswork.
			\vspace*{12pt}
			\begin{mycoro}
				\label{coro_lower_bound}
				Let $\mathrm{G}(X|Y)$ be defined in Eq.(\ref{eq_def_cond_guesswork}). Provided $\mathrm{H}(X|Y = y_j) \geq 2$ for each $y_j$, it holds that
					\begin{align}
						\label{eq_coro_lower_bound}
						\mathrm{G}(X|Y) \geq \frac{1}{4} \cdot 2^{\mathrm{H}(X|Y)} + 1.
					\end{align}
			\end{mycoro}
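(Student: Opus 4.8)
The plan is to lift Massey's lower bound (Lemma~\ref{lem_lower_bound}) from the unconditional setting to the conditional one by applying it separately to each posterior distribution and then aggregating, exactly in the spirit of the proof of Corollary~\ref{coro_gp2}. First I would fix an arbitrary output value $y_j$ and consider the posterior distribution $\{p(x_i|y_j): 1\leq i\leq n\}$ on Alice's alphabet. Since the hypothesis guarantees $\mathrm{H}(X|Y=y_j)\geq 2$, Lemma~\ref{lem_lower_bound} applies to this distribution and yields
\begin{align*}
\mathrm{G}(X|Y=y_j) \geq \frac{1}{4}\cdot 2^{\mathrm{H}(X|Y=y_j)} + 1.
\end{align*}

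Next I would average this inequality over $j$ with weights $p(y_j)$, using the definition~(\ref{eq_def_cond_guesswork}) of $\mathrm{G}(X|Y)$:
\begin{align*}
\mathrm{G}(X|Y) = \sum_{j=1}^m p(y_j)\,\mathrm{G}(X|Y=y_j) \geq \frac{1}{4}\sum_{j=1}^m p(y_j)\, 2^{\mathrm{H}(X|Y=y_j)} + 1.
\end{align*}
The final step is to bound the remaining sum from below by $2^{\mathrm{H}(X|Y)}$. To do this I would invoke Jensen's inequality for the convex function $x\mapsto 2^x$: viewing $\mathrm{H}(X|Y=y_j)$ as a random variable taking value $\mathrm{H}(X|Y=y_j)$ with probability $p(y_j)$, convexity gives
\begin{align*}
\sum_{j=1}^m p(y_j)\, 2^{\mathrm{H}(X|Y=y_j)} \geq 2^{\sum_{j=1}^m p(y_j)\,\mathrm{H}(X|Y=y_j)} = 2^{\mathrm{H}(X|Y)},
\end{align*}
where the last equality is just the definition of the conditional Shannon entropy. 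Combining the two displays yields~(\ref{eq_coro_lower_bound}).

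There is no real obstacle here; the only point requiring a moment's care is that the direction of Jensen's inequality is the useful one precisely because $2^x$ is convex (so the expectation of the exponential dominates the exponential of the expectation), and that the hypothesis $\mathrm{H}(X|Y=y_j)\geq 2$ for \emph{every} $y_j$ — rather than merely $\mathrm{H}(X|Y)\geq 2$ on average — is exactly what licenses the pointwise application of Lemma~\ref{lem_lower_bound} in the first step. I would remark that this per-outcome hypothesis is somewhat stronger than one might like, but it is unavoidable given that Massey's bound itself requires entropy at least $2$.
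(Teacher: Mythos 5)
Your proposal is correct and follows essentially the same route as the paper's own proof: apply Massey's bound (Lemma~\ref{lem_lower_bound}) to each posterior distribution, which the hypothesis $\mathrm{H}(X|Y=y_j)\geq 2$ licenses, average with weights $p(y_j)$, and finish with Jensen's inequality for the convex function $x\mapsto 2^x$. Nothing is missing.
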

				\begin{proof}
					\label{proof_coro_lower_bound}
					We have the following inference:
						\begin{align*}
							\mathrm{G}(X|Y) & = \sum_{j=1}^{m} p(y_j) \mathrm{G}(X|Y = y_j) \\
									& \geq \sum_{j=1}^m p(y_j) (\frac{1}{4}\cdot 2^{\mathrm{H}(X|Y = y_j)} + 1) \\
									& \geq \frac{1}{4} \cdot 2^{\sum_{j=1}^m p(y_j) \mathrm{H}(X|Y=y_j)} + 1 \\
									& = \frac{1}{4} \cdot 2^{\mathrm{H}(X|Y)} + 1,
						\end{align*}
						where the second inequality is from Jensen's inequality.
				\end{proof}

		Upper bound on guesswork in terms of the Shannon entropy also exists.
			\vspace*{12pt}
			\begin{mylemma}[\cite{McelieceY1995}]
				\label{lem_upper_bound}
				Let $\mathrm{G}(X)$ be defined in Eq.(\ref{eq_def_guesswork}). It holds that
					\begin{align}
						\label{eq_lem_upper_bound}
						\mathrm{G}(X) \leq \frac{n-1}{2\log n} \mathrm{H}(X) + 1.
					\end{align}
			\end{mylemma}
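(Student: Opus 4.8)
The plan is to reduce the claimed inequality to the problem of maximizing a ratio of an affine functional and the (concave) Shannon entropy over a polytope, and then to exploit the extreme‑point structure of that polytope. First I would relabel Alice's alphabet so that $p_1 \geq p_2 \geq \cdots \geq p_n$, writing $p_i$ for $p(x_i)$; this is harmless because both $\mathrm{G}(X)$ and $\mathrm{H}(X)$ are invariant under permuting the alphabet. Then $\mathrm{G}(X) - 1 = \sum_{i=1}^n (i-1)p_i$, so with $\lambda \triangleq \frac{n-1}{2\log n}$ (and assuming $n \geq 2$) the target becomes the inequality $L(p) \leq \lambda\,\mathrm{H}(p)$ on the order simplex
\[
    P \triangleq \Big\{ (p_1,\dots,p_n) : p_1\geq p_2\geq\cdots\geq p_n\geq 0,\ \textstyle\sum_{i=1}^n p_i = 1 \Big\},
\]
where $L(p) \triangleq \sum_{i=1}^n (i-1)p_i$ is affine.

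The key observation is that $L(p) - \lambda\,\mathrm{H}(p)$ is a convex function of $p$, being the sum of an affine functional and $-\lambda$ times the concave Shannon entropy with $\lambda > 0$. A convex function on a polytope attains its maximum at an extreme point, and the extreme points of $P$ are exactly the ``flat‑top'' distributions $u_k \triangleq (\tfrac1k,\dots,\tfrac1k,0,\dots,0)$ having $k$ equal nonzero entries, $1\leq k\leq n$. It therefore suffices to check $L(u_k)\leq\lambda\,\mathrm{H}(u_k)$ for each $k$. A short computation gives $L(u_k) = \tfrac{k-1}{2}$ and $\mathrm{H}(u_k) = \log k$, so for $k\geq 2$ the required inequality is $\frac{k-1}{\log k}\leq\frac{n-1}{\log n}$, while at $k=1$ both sides vanish.

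The remaining point is the monotonicity of $\psi(x) \triangleq \frac{x-1}{\log x}$ on $(1,\infty)$: passing to natural logarithms, $\psi'(x)$ has the same sign as $\ln x - 1 + \tfrac1x$, which vanishes at $x=1$ and has non‑negative derivative $\tfrac1x - \tfrac1{x^2}$ for $x\geq 1$, hence is itself non‑negative; so $\psi(k)\leq\psi(n)$ whenever $k\leq n$. Assembling the pieces proves $L(p)\leq\lambda\,\mathrm{H}(p)$ on $P$, hence $\mathrm{G}(X)\leq\frac{n-1}{2\log n}\mathrm{H}(X) + 1$ in general, with equality precisely at the uniform distribution. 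The step I expect to be most delicate is the reduction ``convex on a polytope $\Rightarrow$ maximum at an extreme point'' together with the verification that the extreme points of the order simplex $P$ are exactly the distributions $u_k$; once this is in place, the rest is elementary calculus.
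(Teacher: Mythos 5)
Your argument is correct. The paper itself gives no proof of this lemma: it is imported verbatim from the cited reference of McEliece and Yu, so there is no in-paper argument to compare against. Your reduction is a clean, self-contained derivation: after sorting, $\mathrm{G}(X)-1=\sum_i(i-1)p_i$ is affine, so $L(p)-\lambda\mathrm{H}(p)$ with $\lambda=\frac{n-1}{2\log n}>0$ is convex on the order simplex, and its maximum is controlled by the flat-top distributions $u_k$. The step you flag as delicate is in fact standard and even lighter than you suggest: you do not need the full characterization of extreme points, only the convex-hull representation $p=\sum_k c_k u_k$ with $c_k=k(p_k-p_{k+1})\geq 0$ and $\sum_k c_k=1$ (Abel summation), after which convexity gives $L(p)-\lambda\mathrm{H}(p)\leq\max_k\bigl(L(u_k)-\lambda\mathrm{H}(u_k)\bigr)$ directly. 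The vertex computation $L(u_k)=\frac{k-1}{2}$, $\mathrm{H}(u_k)=\log k$, and the monotonicity of $x\mapsto\frac{x-1}{\log x}$ on $(1,\infty)$ are all verified correctly (and the base of the logarithm is immaterial there, as you note). One small inaccuracy in a side remark: equality is attained not only at the uniform distribution but also at a point distribution ($\mathrm{G}(X)=1$, $\mathrm{H}(X)=0$), which is exactly the equality condition the paper invokes later when discussing saturation of its Theorem~\ref{thm_ub}; this does not affect the inequality you were asked to prove.
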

			\vspace*{12pt}
		Again, we present the conditional counterpart of Eq.(\ref{eq_lem_upper_bound}).
			\vspace*{12pt}
			\begin{mycoro}
				\label{coro_upper_bound}
				Let $\mathrm{G}(X|Y)$ be defined in Eq.(\ref{eq_def_cond_guesswork}). It holds that
					\begin{align}
						\label{eq_coro_upper_bound}
						\mathrm{G}(X|Y) \leq \frac{n-1}{2\log n} \mathrm{H}(X|Y) + 1.
					\end{align}
			\end{mycoro}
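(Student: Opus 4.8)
The plan is to reduce the conditional statement to the unconditional McEliece--Yu bound (Lemma~\ref{lem_upper_bound}) applied posterior-by-posterior, in the same spirit as the proof of Corollary~\ref{coro_lower_bound}. First I would expand $\mathrm{G}(X|Y)$ via its definition in Eq.(\ref{eq_def_cond_guesswork}) as the $p(y_j)$-weighted average $\sum_{j=1}^m p(y_j)\,\mathrm{G}(X|Y=y_j)$. Each summand $\mathrm{G}(X|Y=y_j)$ is the unconditional guesswork of the posterior random variable on Alice's alphabet $\{x_1,\dots,x_n\}$ with distribution $\{p(x_i|y_j)\}$, so Lemma~\ref{lem_upper_bound} applies directly to it and gives $\mathrm{G}(X|Y=y_j) \leq \frac{n-1}{2\log n}\mathrm{H}(X|Y=y_j) + 1$ for each $j$.

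Next I would substitute this estimate into the weighted average. The key structural point is that, unlike Massey's lower bound (Lemma~\ref{lem_lower_bound}), which is convex in $\mathrm{H}$ and therefore forced Corollary~\ref{coro_lower_bound} to invoke Jensen's inequality, the McEliece--Yu upper bound is \emph{affine} in the entropy. Hence pulling the weighted sum inside is an exact identity, not an inequality: $\sum_{j} p(y_j)\big(\tfrac{n-1}{2\log n}\mathrm{H}(X|Y=y_j)+1\big) = \tfrac{n-1}{2\log n}\sum_{j} p(y_j)\mathrm{H}(X|Y=y_j) + \sum_{j} p(y_j) = \tfrac{n-1}{2\log n}\mathrm{H}(X|Y) + 1$, where the last step uses the definition of the conditional Shannon entropy and $\sum_{j}p(y_j)=1$. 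Chaining the two displays yields Eq.(\ref{eq_coro_upper_bound}).

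There is essentially no obstacle here; the only point requiring a word of care is that the coefficient $\tfrac{n-1}{2\log n}$ involves the size $n$ of Alice's alphabet $\mathcal{X}$, which is the same for every posterior random variable $X|Y=y_j$, so it factors cleanly out of the weighted sum and no hypothesis analogous to the ``$\mathrm{H}(X|Y=y_j)\geq 2$'' condition of Corollary~\ref{coro_lower_bound} is needed. In writing it up I would present the argument as a short four-line display mirroring the proof of Corollary~\ref{coro_lower_bound}, with the second inequality there replaced by an equality.
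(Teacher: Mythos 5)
Your proposal is correct and matches the paper's own proof: expand $\mathrm{G}(X|Y)$ as the $p(y_j)$-weighted average, apply Lemma~\ref{lem_upper_bound} to each posterior, and use the affine form of the bound so the average recombines exactly into $\frac{n-1}{2\log n}\mathrm{H}(X|Y)+1$. Your observation that no Jensen step (and no entropy-threshold hypothesis) is needed, in contrast to Corollary~\ref{coro_lower_bound}, is exactly the point implicit in the paper's argument.
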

			\begin{proof}
				\label{proof_coro_upper_bound}
				\begin{align*}
					\mathrm{G}(X|Y) & = \sum_{j=1}^{m} p(y_j) \mathrm{G}(X|Y = y_j) \\
									 & \leq \sum_{j=1}^m p(y_j) (\frac{n-1}{2\log n}\mathrm{H}(X|Y=y_j) + 1) \\
									 & = \frac{n-1}{2\log n} \mathrm{H}(X|Y) + 1.
				\end{align*}
			\end{proof}

		Now, let us review two bounds on accessible information.
		Let $\mathcal{E}$ be a quantum encoding of a random variable $X$, and $Y$ be the random variable induced by $\mathcal{E}$ and some POVM $\Pi$ as described in Section~\ref{sec_qgp}. The \emph{accessible information} $\mathrm{I}_{acc}(\mathcal{E})$ of the ensemble $\mathcal{E}$ is defined as the maximum mutual information between $X$ and $Y$ obtainable via varying the POVM $\Pi$:
			\begin{align*}
				\mathrm{I}_{acc}(\mathcal{E}) \triangleq \max_{\Pi \in\mathcal{M}} \mathrm{I}(X:Y),
			\end{align*}
		where $\mathrm{I}(X:Y) = \mathrm{H}(X) - \mathrm{H}(X|Y)$.
		Although accessible information is difficult to characterize analytically, various upper and lower bounds have been found. In a celebrated paper~\cite{Holevo1973b}, Holevo bounded $\mathrm{I}_{acc}(\mathcal{E})$ as follows:
			\begin{align}
				\label{eq_holevo_bound}
				\mathrm{I}_{acc}(\mathcal{E}) \leq \chi(\mathcal{E}) \triangleq \mathrm{S}(\sum_{i=1}^n p(x_i)\rho_{x_i}) - \sum_{i=1}^n p(x_i) \mathrm{S}(\rho_{x_i}),
			\end{align}
		where the \emph{von Neumann entropy} $\mathrm{S}(\rho)$ of quantum state $\rho$ is given by $\mathrm{S}(\rho) \triangleq - \mathrm{Tr}(\rho\log\rho)$ which is a natural extension of the Shannon entropy. The quantity $\chi(\mathcal{E})$ is usually referred to as the \emph{Holevo information} of the ensemble $\mathcal{E}$. 
		Interestingly, accessible information can also be lower bounded by a quantity which has an analogous form to the Holevo information. With the notion of \emph{subentropy} which is defined by
			\begin{align*}
				\mathrm{Q}(\rho) \triangleq -\sum_{k}\prod_{l\not=k}\frac{\lambda_k}{\lambda_k - \lambda_l} \lambda_k \log \lambda_k,
			\end{align*}
		with $\lambda_k$'s being the eigenvalues of the state $\rho$, Jozsa, Robb and Wootters~\cite{JozsaRW1994} obtained the following inequality:
			\begin{align}
				\label{eq_jrw_bound}
				\mathrm{I}_{acc}(\mathcal{E}) \geq \Lambda(\mathcal{E}) \triangleq \mathrm{Q}(\sum_{i=1}^n p(x_i)\rho_{x_i}) - \sum_{i=1}^n p(x_i) \mathrm{Q}(\rho_{x_i}).
			\end{align}

		Combining Eq.(\ref{eq_coro_lower_bound}) and Eq.(\ref{eq_holevo_bound}), we obtain the following lower bound on minimum guesswork.
			\vspace*{12pt}
			\begin{mythm}
				\label{thm_lb}
				Let $\mathcal{E}$ be a quantum encoding of a random variable $X$ and $\mathrm{G}^{opt}(\mathcal{E})$ be defined in Eq.(\ref{eq_def_min_guesswork}). Provided $\mathrm{H}(X_{\pi}) \geq 2$ for any $\pi\in\mathcal{P}$, it holds that
					\begin{align}
						\label{eq_thm_lb}
						\mathrm{G}^{opt}(\mathcal{E}) \geq \frac{1}{4} \cdot 2^{ \mathrm{H}(X)-\chi(\mathcal{E}) } + 1.
					\end{align}
			\end{mythm}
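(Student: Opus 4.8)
The plan is to chain three ingredients: the per-outcome lower bound on guesswork in terms of conditional Shannon entropy (Corollary~\ref{coro_lower_bound}), the elementary identity $\mathrm{I}(X:Y) = \mathrm{H}(X) - \mathrm{H}(X|Y)$ together with the definition of accessible information as a maximum over POVMs, and Holevo's bound Eq.(\ref{eq_holevo_bound}). Since the set of POVMs is compact and $\mathrm{G}(X|Y)$ depends continuously on the POVM, the minimum in Eq.(\ref{eq_def_min_guesswork}) is attained; fix an optimal POVM $\Pi = \{\pi_{y_j}\}$ achieving $\mathrm{G}^{opt}(\mathcal{E})$, discard any zero operators, and let $Y$ be the random variable it induces together with $\mathcal{E}$. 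By the definition of $X_\pi$, the posterior distribution $\{p(x_i|y_j)\}$ coincides with the distribution of $X_{\pi_{y_j}}$, so $\mathrm{H}(X|Y=y_j) = \mathrm{H}(X_{\pi_{y_j}}) \geq 2$ by the hypothesis of the theorem (each $\pi_{y_j}\in\mathcal{P}$). This is exactly the condition needed to invoke Corollary~\ref{coro_lower_bound}, which gives $\mathrm{G}^{opt}(\mathcal{E}) = \mathrm{G}(X|Y) \geq \tfrac{1}{4}\cdot 2^{\mathrm{H}(X|Y)} + 1$.

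It then remains only to bound $\mathrm{H}(X|Y)$ from below. Writing $\mathrm{H}(X|Y) = \mathrm{H}(X) - \mathrm{I}(X:Y)$ and using $\mathrm{I}(X:Y) \leq \mathrm{I}_{acc}(\mathcal{E}) \leq \chi(\mathcal{E})$ — the first inequality because $\mathrm{I}_{acc}(\mathcal{E})$ is the maximum of $\mathrm{I}(X:Y)$ over all POVMs, the second being Holevo's bound — we obtain $\mathrm{H}(X|Y) \geq \mathrm{H}(X) - \chi(\mathcal{E})$. Since $t\mapsto \tfrac{1}{4}\cdot 2^{t} + 1$ is monotonically increasing, substituting this into the previous estimate yields Eq.(\ref{eq_thm_lb}).

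There is no deep obstacle here; the argument is a routine composition of known bounds. The only points requiring a little care are (i) confirming that the uniform hypothesis ``$\mathrm{H}(X_\pi)\geq 2$ for any $\pi\in\mathcal{P}$'' indeed licenses the application of Corollary~\ref{coro_lower_bound} to the particular optimal POVM at hand — which it does, precisely because the hypothesis ranges over all non-zero PSD operators and hence covers every measurement operator of every POVM — and (ii) ensuring the optimal POVM exists so that the chain begins with the genuine equality $\mathrm{G}^{opt}(\mathcal{E}) = \mathrm{G}(X|Y)$; if one prefers to sidestep the existence question, the same estimate applied to a near-optimal POVM and an infimum argument works equally well. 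Note that Proposition~\ref{thm_cm} is not required for this direction, since the entropy condition is already assumed for all of $\mathcal{P}$ rather than only for rank-one operators.
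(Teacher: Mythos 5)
Your proposal is correct and follows essentially the same route as the paper: apply Corollary~\ref{coro_lower_bound} to an optimal POVM for $\mathrm{G}^{opt}(\mathcal{E})$ (its hypothesis being supplied by the assumption $\mathrm{H}(X_{\pi})\geq 2$ for all $\pi\in\mathcal{P}$), and then lower-bound $\mathrm{H}(X|Y)$ by $\mathrm{H}(X)-\chi(\mathcal{E})$ via the Holevo bound. The extra remarks on existence of the optimal POVM and discarding zero operators are harmless housekeeping that the paper leaves implicit.
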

			\begin{proof}
				\label{proof_thm_lb}
				Since
					\begin{align*}
						\mathrm{I}_{acc}(\mathcal{E}) = \mathrm{H}(X) - \min_{\Pi\in\mathcal{M}} \mathrm{H}(X|Y) \leq \chi(\mathcal{E}),
					\end{align*}
				we have that
					\begin{align}
						\label{eq_proof_lb1}
						\min_{\Pi\in\mathcal{M}} \mathrm{H}(X|Y) \geq \mathrm{H}(X) - \chi(\mathcal{E}).
					\end{align}
				Let $\Pi$ be an optimal POVM achieving $\mathrm{G}^{opt}(\mathcal{E})$. Due to the guarantee that $\mathrm{H}(X_{\pi}) \geq 2$ for any $\pi\in\mathcal{P}$, we are safe to apply Eq.(\ref{eq_coro_lower_bound}):
					\begin{align*}
						\mathrm{G}(X|Y) \geq \frac{1}{4}\cdot 2^{\mathrm{H}(X|Y)} + 1,
					\end{align*}
				where $Y$ is the random variable induced by $\mathcal{E}$ and $\Pi$. Then, Eq.(\ref{eq_proof_lb1}) and Eq.(\ref{eq_coro_lower_bound}) together imply that
					\begin{align*}
						\mathrm{G}^{opt}(\mathcal{E}) \geq \frac{1}{4}\cdot 2^{\mathrm{H}(X)-\chi(\mathcal{E})} + 1,
					\end{align*}
				as required.
			\end{proof}
		From the above proof, we see that the equality in Eq.(\ref{eq_thm_lb}) holds if and only if there exists an optimal POVM $\Pi$ achieving $\mathrm{G}^{opt}(\mathcal{E})$ such that $\mathrm{G}(X|Y) = \frac{1}{4}\cdot 2^{\mathrm{H}(X|Y)} + 1$ and $\mathrm{H}(X|Y) = \mathrm{H}(X) - \chi(\mathcal{E})$. To satisfy the first equation, we have to require $\mathrm{G}(X_{\pi}) = \frac{1}{4}\cdot 2^{\mathrm{H}(X_{\pi})} + 1$ for each $\pi\in\Pi$ (see the proof of Corollary~\ref{coro_lower_bound}). Consequently, the alphabet of $X$ must be countably infinite and each $X_{\pi}$ must obey the geometric distribution $\{\frac{1}{2},\frac{1}{2^2},\cdots\}$~\cite{Massey1994}. A trivial case is when $X_{\pi}$ are identical, e.g.,
					\begin{align*}
						\mathrm{Pr}(X_{\pi} = x_i) = \frac{1}{2^i}, \qquad i = 1,2,\cdots,
					\end{align*}
		for each $\pi\in\Pi$. In this case, we have that $\mathrm{H}(X) = \mathrm{H}(X|Y) = \mathrm{G}(X|Y) = 2$, which further implies that $\chi(\mathcal{E}) = 0$ as required by the second equation. Therefore, Alice's choice of $\mathcal{E}$ has to be a trivial encoding with all the states being identical! Indeed, this special system satisfies the ``no-measurement'' condition discussed in Section~\ref{sec_n_m}. That is, Bob cannot decrease his prior guesswork $\mathrm{G}(X)$ by applying any measurement.

		Nonetheless, there also exists non-trivial quantum encoding $\mathcal{E}$ satisfying the equality condition of Eq.(\ref{eq_thm_lb}). We give an example below. Let $X$ be a random variable with the associated distribution given by
			\begin{align*}
				p(x_1) &= p(x_2)=\frac{3}{8},\\
				p(x_i) &= \frac{1}{2^i}, \qquad i = 3,4,\cdots.
			\end{align*}
		Consider the following quantum encoding $\mathcal{E}$ of $X$:
			\begin{align*}
				\rho_{x_1} &= \frac{2}{3} |0\rangle\langle 0| + \frac{1}{3} |1\rangle\langle 1|, \\
				\rho_{x_2} &= \frac{1}{3} |0\rangle\langle 0| + \frac{2}{3} |1\rangle\langle 1|, \\
				\rho_{x_i} &= \frac{1}{2} |0\rangle\langle 0| + \frac{1}{2} |1\rangle\langle 1|, \qquad i = 3,4,\cdots.
			\end{align*}
		It is straightforward to verify that $\mathrm{H}(X_{\pi}) \geq 2$ for any $\pi\in\mathcal{P}$. Then, with the POVM $\Pi = \{\pi_{y_1} = |0\rangle\langle 0|, \pi_{y_2} = |1\rangle\langle 1|\}$, we see that both $X_{\pi_{y_1}}$ and $X_{\pi_{y_2}}$ obey the geometric distribution $\{\frac{1}{2},\frac{1}{2^2},\cdots\}$, and thus $\mathrm{H}(X|Y) = \mathrm{G}(X|Y) = 2$. On the other hand, we can calculate that $\mathrm{H}(X) = 13/4 - 3\log 3/4$ and $\chi(\mathcal{E}) = 5/4 - 3\log 3/4$. Hence, it holds that $\mathrm{H}(X|Y) = \mathrm{H}(X) - \chi(\mathcal{E})$ as required by the equality condition of Eq.(\ref{eq_thm_lb}).

			\vspace*{12pt}
			\begin{myrmk}
				\label{rmk_2}
				We note that in order to apply Eq.(\ref{eq_thm_lb}) we need to verify in advance the precondition $\mathrm{H}(X_{\pi}) \geq 2$ for any non-zero PSD $\pi$. This constraint, which limits the applicability of the bound, originates from the one in Lemma~\ref{lem_lower_bound}. Intuitively, it can only be fulfilled by quantum states whose spanning spaces overlap to a certain extent. We give an example where this constraint holds. Consider a $5$-dimensional Hilbert space $\mathcal{H}_5$ with an orthogonal basis $\{|1\rangle,\cdots,|5\rangle\}$. Let $\mathcal{E} = \{(1/5,\rho_{x_i}):1\leq i\leq 5\}$ be a quantum encoding of some random variable $X$, with $\rho_{x_i}$ defined by
					\begin{align*}
						\rho_{x_i} = \frac{1}{4}(I - |i\rangle\langle i|),
					\end{align*}
				where $I$ is the identity operator on $\mathcal{H}_5$. For any $\pi\in\mathcal{P}$, we can calculate that $\mathrm{H}(X_{\pi}) = - \sum_{i=1}^5 p_i\log p_i$ with
					\begin{align*}
						p_i = \frac{1}{4}(1-\frac{\langle i|\pi|i\rangle}{\mathrm{Tr}(\pi)}).
					\end{align*}
				It is easy to verify that $\mathrm{H}(X_{\pi}) \geq 2$.
			\end{myrmk}
			\vspace*{12pt}

		To bound minimum guesswork in the other direction, we combine Eq.(\ref{eq_coro_upper_bound}) and Eq.(\ref{eq_jrw_bound}).
			\vspace*{12pt}
			\begin{mythm}
				\label{thm_ub}
				Let $\mathcal{E}$ be a quantum encoding of a random variable $X$ and $\mathrm{G}^{opt}(\mathcal{E})$ be defined in Eq.(\ref{eq_def_min_guesswork}). It holds that
					\begin{align}
						\label{eq_thm_ub}
						\mathrm{G}^{opt}(\mathcal{E}) \leq \frac{n-1}{2\log n} (\mathrm{H}(X) - \Lambda(\mathcal{E})) + 1.
					\end{align}
			\end{mythm}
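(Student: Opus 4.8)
The plan is to mirror the proof of Theorem~\ref{thm_lb}, but with two substitutions: the Holevo bound is replaced by the subentropy bound Eq.(\ref{eq_jrw_bound}), and the lower estimate of Corollary~\ref{coro_lower_bound} is replaced by the upper estimate of Corollary~\ref{coro_upper_bound}. The key simplification is that, since we are now after an \emph{upper} bound on $\mathrm{G}^{opt}(\mathcal{E})$, we do not need to reason about the POVM that is actually optimal for guesswork; it suffices to exhibit one convenient POVM and bound $\mathrm{G}^{opt}(\mathcal{E})$ above by its guesswork. The natural choice is a POVM that (nearly) attains the accessible information.

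Concretely, I would first rewrite Eq.(\ref{eq_jrw_bound}) in the equivalent form
\[
	\min_{\Pi\in\mathcal{M}} \mathrm{H}(X|Y) = \mathrm{H}(X) - \mathrm{I}_{acc}(\mathcal{E}) \leq \mathrm{H}(X) - \Lambda(\mathcal{E}),
\]
using the definition $\mathrm{I}_{acc}(\mathcal{E}) = \max_{\Pi\in\mathcal{M}}(\mathrm{H}(X)-\mathrm{H}(X|Y))$. Then, for any $\varepsilon>0$, pick a POVM $\Pi_\varepsilon$ whose induced variable $Y_\varepsilon$ satisfies $\mathrm{H}(X|Y_\varepsilon)\leq \mathrm{H}(X)-\Lambda(\mathcal{E})+\varepsilon$. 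Applying Corollary~\ref{coro_upper_bound} to $\Pi_\varepsilon$ — which, unlike the lower bound of Theorem~\ref{thm_lb}, requires no precondition on the posteriors $\mathrm{H}(X_\pi)$ — together with the minimality defining $\mathrm{G}^{opt}(\mathcal{E})$ yields
\[
	\mathrm{G}^{opt}(\mathcal{E}) \leq \mathrm{G}(X|Y_\varepsilon) \leq \frac{n-1}{2\log n}\,\mathrm{H}(X|Y_\varepsilon) + 1 \leq \frac{n-1}{2\log n}\bigl(\mathrm{H}(X)-\Lambda(\mathcal{E})+\varepsilon\bigr) + 1.
\]
Letting $\varepsilon\to 0$ gives Eq.(\ref{eq_thm_ub}).

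I do not expect a genuine obstacle: the only point needing a word of justification is that Corollary~\ref{coro_upper_bound} applies to the chosen $\Pi_\varepsilon$, i.e.\ that the accessible information can be approximated (or in fact attained) by a POVM with finitely many outcomes — a standard fact, since an optimal measurement for $\mathrm{I}(X:Y)$ may be taken to consist of at most $(\dim\mathcal{H})^2$ rank-one operators. The $\varepsilon$-argument above avoids even invoking attainability, so the proof is essentially a one-line chain of inequalities once Eq.(\ref{eq_jrw_bound}) is put in the right form.
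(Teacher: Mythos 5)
Your proposal is correct and follows essentially the same route as the paper: combine the subentropy bound Eq.(\ref{eq_jrw_bound}), rewritten as $\min_{\Pi\in\mathcal{M}}\mathrm{H}(X|Y)\leq \mathrm{H}(X)-\Lambda(\mathcal{E})$, with Corollary~\ref{coro_upper_bound} applied to a near-optimal POVM and the minimality defining $\mathrm{G}^{opt}(\mathcal{E})$. The only (harmless) difference is your $\varepsilon$-argument, where the paper simply takes a POVM attaining $\mathrm{H}(X|Y)\leq\mathrm{H}(X)-\Lambda(\mathcal{E})$ directly.
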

			\begin{proof}
				\label{proof_thm_ub}
				Since
					\begin{align*}
						\mathrm{I}_{acc}(\mathcal{E}) = \mathrm{H}(X) - \min_{\Pi\in\mathcal{M}} \mathrm{H}(X|Y) \geq \Lambda(\mathcal{E}), 
					\end{align*}
				we have that
					\begin{align*}
						\min_{\Pi\in\mathcal{M}} \mathrm{H}(X|Y) \leq \mathrm{H}(X) - \Lambda(\mathcal{E}).
					\end{align*}
				Consequently, there must exist some $\Pi$ and $Y$ such that $\mathrm{H}(X|Y) \leq \mathrm{H}(X) - \Lambda(\mathcal{E})$. Applying Eq.(\ref{eq_coro_upper_bound}), we obtain the following inference:
					\begin{equation}
						\label{eq_proof_ub}
						\begin{split}
						\mathrm{G}^{opt}(\mathcal{E}) & \leq \mathrm{G}(X|Y) \\
											& \leq \frac{n-1}{2\log n} \mathrm{H}(X|Y) + 1 \\
											& \leq \frac{n-1}{2\log n} (\mathrm{H}(X) - \Lambda(\mathcal{E})) + 1.
						\end{split}
					\end{equation}
			\end{proof}
		Let us examine when the inequality in Eq.(\ref{eq_thm_ub}) is saturated. First, it is required that $\mathrm{I}_{acc}(\mathcal{E}) = \Lambda(\mathcal{E})$. Otherwise, there must exist a random variable $Y$ induced by $\mathcal{E}$ and some POVM $\Pi$ such that $\mathrm{H}(X|Y) < \mathrm{H}(X) - \Lambda(\mathcal{E})$, which further implies that $\mathrm{G}^{opt}(\mathcal{E})$ is strictly less than the RHS term in Eq.(\ref{eq_thm_ub}). According to the discussion in~\cite{JozsaRW1994}, the lower bound $\Lambda(\mathcal{E})$ on the accessible information of $\mathcal{E}$ can only be achieved by the so-called \emph{Scrooge ensemble} or trivially by an ensemble with all the states being identical. It was showed that the amount of information that we can obtain from the Scrooge ensemble by performing some complete POVM is independent of the choice of POVM.
		
		Second, for any complete POVM $\Pi$ and the corresponding random variable $Y$ induced by $\mathcal{E}$ and $\Pi$, we require that $\mathrm{G}(X|Y) = \frac{n-1}{2\log n}\mathrm{H}(X|Y) + 1$ to saturate the second inequality in Eq.(\ref{eq_proof_ub}). It then follows from the proof of Corollary~\ref{coro_upper_bound} that $\mathrm{G}(X_{\pi}) = \frac{n-1}{2\log n} \mathrm{H}(X_{\pi}) + 1$ for any $\pi\in\mathcal{P}$. On the other hand, the equality in Eq.(\ref{eq_lem_upper_bound}) holds if and only if $X$ obeys either the uniform distribution, i.e., $p(x_i)= 1/n$ for each $1\leq i\leq n$, or trivially a point distribution, i.e., $p(x_i) = 1$ for some $1\leq i\leq n$~\cite{McelieceY1995}. As a consequence, all the quantum states in $\mathcal{E}$ must be either mutually orthogonal or identical. Since mutually orthogonal states cannot form a Scrooge ensemble, we conclude that the equality in Eq.(\ref{eq_thm_ub}) holds if and only if all the states in $\mathcal{E}$ are identical.
		

	\section{Geometrically uniform states}
	\noindent
		\label{sec_gu}
		In this section, we consider MGD for a special type of symmetric quantum states, the \emph{geometrically uniform} states~\cite{EldarF2001}. We provide sufficient conditions when some geometrically uniform measurement achieves minimum guesswork. With this technique, we are able to prove the optimality of the POVM $\Pi^G$, as given in Example~\ref{exa_trine}, for the trine ensemble.
		
		Let $\mathcal{G} = \{U_i: 1\leq i\leq n\}$ be a finite group which contains $n$ unitary operators. The \emph{identity} element of $\mathcal{G}$ and the \emph{inverse} of $U_i$ is denoted by $I$ and $U_i^{\dagger}$, respectively. 
		An ensemble of states $\mathcal{E} = \{(p(x_i),\rho_{x_i}):1\leq i\leq n\}$ is called \emph{geometrically uniform} if $\rho_{x_i}$'s can be generated by a state $\rho_0$ and a group $\mathcal{G}$ such that $\rho_{x_i} = U_i \rho_0 U_i^{\dagger}$. It is also assumed that $p(x_i) = 1/n$, as required by symmetry. Similarly, we can define such symmetry for quantum measurements. A POVM $\Pi = \{\pi_{y_i}:1\leq i\leq n\}$ is called geometrically uniform if the measurement operators are generated by some $\pi_0$ and $\mathcal{G}$ such that $\pi_{y_i} = U_i \pi_0 U_i^{\dagger}$. In~\cite{EldarMV2004}, Eldar, Megretski, and Verghese showed that the square-root measurement of a geometrically uniform ensemble is also geometrically uniform, and that there always exists a geometrically uniform measurement which achieves the minimum error for this ensemble.
		
		We state our sufficient conditions as follows. Given a unitary operator $V$ and a POVM $\Pi = \{\pi_{y_i}:1\leq i\leq n\}$, we use $V\Pi V^{\dagger}$ to denote the new POVM $\{V\pi_{y_i} V^{\dagger}:1\leq i\leq n\}$. 
			\vspace*{12pt}
			\begin{mythm}
				\label{thm_geo}
				Let $\mathcal{E} = \{(p(x_i),\rho_{x_i}):1\leq i\leq n\}$ be a quantum encoding of a random variable $X$ such that $\mathcal{E}$ is geometrically uniform being generated by a state $\rho_0$ and a group $\mathcal{G} = \{U_i: 1\leq i\leq n\}$. Given a geometrically uniform POVM $\Pi = \{\pi_{y_i}: 1\leq i\leq n\}$, which is generated by $\pi_0$ and $\mathcal{G}$, if there exists a unitary operator $V$ such that
					\begin{itemize}
						\item[(i)]
						$V U_i = U_i V$, for each $1\leq i\leq n$, and
						\item[(ii)]
						$\mathrm{G}(X_{V\pi_0 V^{\dagger}}) = \min_{\pi\in\mathcal{P}_1} \mathrm{G}(X_{\pi})$,
					\end{itemize}
				then $V\Pi V^{\dagger}$ achieves the minimum guesswork for $\mathcal{E}$, i.e., $\mathrm{G}^{opt}(\mathcal{E}) = \mathrm{G}(X|V\Pi V^{\dagger})$.
			\end{mythm}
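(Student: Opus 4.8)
The plan is to prove directly that $V\Pi V^{\dagger}$ is a legitimate POVM whose conditional guesswork equals $\min_{\pi\in\mathcal{P}_1}\mathrm{G}(X_{\pi})$, and then to note that no POVM at all can do better than that value. Throughout, write $\pi_0' = V\pi_0 V^{\dagger}$ and $\pi_{y_i}' = V\pi_{y_i}V^{\dagger}$, so that $V\Pi V^{\dagger} = \{\pi_{y_i}'\}$. First I would record two facts about $V\Pi V^{\dagger}$. Since $V$ is unitary, $\sum_i \pi_{y_i}' = V\big(\sum_i \pi_{y_i}\big)V^{\dagger} = VIV^{\dagger} = I$, so it is a POVM; and by the commutation hypothesis~(i),
\begin{align*}
	\pi_{y_i}' = V U_i \pi_0 U_i^{\dagger} V^{\dagger} = U_i \big(V\pi_0 V^{\dagger}\big) U_i^{\dagger} = U_i \pi_0' U_i^{\dagger},
\end{align*}
so $V\Pi V^{\dagger}$ is again geometrically uniform, generated by $\pi_0'$ together with the \emph{same} group $\mathcal{G}$.

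The core step is to turn the joint symmetry of $\mathcal{E}$ and $V\Pi V^{\dagger}$ into the statement that all $n$ posterior distributions $X_{\pi_{y_i}'}$ coincide up to relabeling, hence have equal guesswork. Using $\rho_{x_j} = U_j\rho_0 U_j^{\dagger}$, $\pi_{y_i}' = U_i\pi_0' U_i^{\dagger}$ and cyclicity of the trace,
\begin{align*}
	\mathrm{Tr}(\rho_{x_j}\pi_{y_i}') = \mathrm{Tr}\big(U_i^{\dagger}\rho_{x_j}U_i \cdot \pi_0'\big) = \mathrm{Tr}(\rho_{x_{\lambda_i(j)}}\pi_0'),
\end{align*}
where $\lambda_i$ is the permutation of $\{1,\cdots,n\}$ determined by $U_{\lambda_i(j)} = U_i^{\dagger}U_j$ (this is a permutation because left multiplication by the fixed group element $U_i^{\dagger}$ permutes $\mathcal{G}$). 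Since $p(x_j) = 1/n$ for all $j$, re-indexing the normalizing sum by $\lambda_i$ gives $\mathrm{Pr}(X_{\pi_{y_i}'} = x_j) = \mathrm{Pr}(X_{\pi_0'} = x_{\lambda_i(j)})$; that is, the law of $X_{\pi_{y_i}'}$ is the law of $X_{\pi_0'}$ with its alphabet relabeled by $\lambda_i$. As guesswork depends only on the multiset of probability values, $\mathrm{G}(X_{\pi_{y_i}'}) = \mathrm{G}(X_{\pi_0'})$ for every $i$, and therefore, with $p(y_i') = \sum_j p(x_j)\mathrm{Tr}(\rho_{x_j}\pi_{y_i}')$ and $\sum_i p(y_i') = 1$,
\begin{align*}
	\mathrm{G}(X|V\Pi V^{\dagger}) = \sum_{i=1}^n p(y_i')\,\mathrm{G}(X_{\pi_{y_i}'}) = \mathrm{G}(X_{\pi_0'}) = \mathrm{G}(X_{V\pi_0 V^{\dagger}}) = \min_{\pi\in\mathcal{P}_1}\mathrm{G}(X_{\pi}),
\end{align*}
the final equality being hypothesis~(ii).

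To close the argument I would invoke Proposition~\ref{thm_cm}, which gives $\mathrm{G}^{opt}(\mathcal{E}) = \min_{\Pi\in\mathcal{M}_c}\mathrm{G}(X|Y)$: for any complete POVM $\{\pi_{y_j}\}$ every $\pi_{y_j}$ lies in $\mathcal{P}_1$, so $\mathrm{G}(X|Y) = \sum_j p(y_j)\mathrm{G}(X_{\pi_{y_j}}) \geq \sum_j p(y_j)\min_{\pi\in\mathcal{P}_1}\mathrm{G}(X_{\pi}) = \min_{\pi\in\mathcal{P}_1}\mathrm{G}(X_{\pi})$, whence $\mathrm{G}^{opt}(\mathcal{E}) \geq \min_{\pi\in\mathcal{P}_1}\mathrm{G}(X_{\pi})$. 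Combining this with the previous display and with $\mathrm{G}^{opt}(\mathcal{E})\leq\mathrm{G}(X|V\Pi V^{\dagger})$ (valid because $V\Pi V^{\dagger}$ is a POVM) forces all inequalities to be equalities, so $V\Pi V^{\dagger}$ achieves $\mathrm{G}^{opt}(\mathcal{E})$.

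The step I expect to be the main obstacle is the middle one: making the reduction of all $n$ posteriors to the single posterior $X_{\pi_0'}$ rigorous. It relies on $\rho_0$ and $\pi_0'$ being ``synchronised'' by one and the same group, which is precisely what hypothesis~(i) preserves under conjugation by $V$ --- without $VU_i = U_iV$ the conjugated measurement need not be geometrically uniform for $\mathcal{G}$ and the whole reduction collapses. Equiprobability $p(x_i)=1/n$ is also used crucially here. The converse bound, by contrast, is essentially immediate from Proposition~\ref{thm_cm} (equivalently, from the splitting estimate in its proof, which moreover yields $\min_{\pi\in\mathcal{P}}\mathrm{G}(X_{\pi}) = \min_{\pi\in\mathcal{P}_1}\mathrm{G}(X_{\pi})$).
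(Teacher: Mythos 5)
Your proposal is correct and follows essentially the same route as the paper: the heart of both arguments is the symmetry computation showing every posterior $X_{V\pi_{y_j}V^{\dagger}}$ has the same guesswork as $X_{V\pi_0V^{\dagger}}$ (you phrase it as a relabeling by the permutation $\lambda_i$ induced by left multiplication in $\mathcal{G}$; the paper writes the same re-indexing with optimal permutations $\sigma_j$), combined with the lower bound $\mathrm{G}^{opt}(\mathcal{E})\geq\min_{\pi\in\mathcal{P}_1}\mathrm{G}(X_{\pi})$. The only cosmetic difference is that you obtain this lower bound by citing Proposition~\ref{thm_cm} (restriction to complete POVMs), whereas the paper derives $\min_{\pi\in\mathcal{P}}\mathrm{G}(X_{\pi})=\min_{\pi\in\mathcal{P}_1}\mathrm{G}(X_{\pi})$ directly from the splitting inequality Eq.~(\ref{eq_proof_thm_cm}) --- the same underlying estimate, as you yourself note.
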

			\begin{proof}
				The proof of this theorem consists of two steps. First, we show that the minimum of $\mathrm{G}(X_\pi)$ over all $\pi \in \mathcal{P}$ can always be obtained in $\mathcal{P}_1$; that is, $\min_{\pi\in \mathcal{P}} \mathrm{G}(X_\pi) = \min_{\pi\in \mathcal{P}_1} \mathrm{G}(X_\pi)$. To see this, consider Eq.(\ref{eq_proof_thm_cm}) used in the proof of Proposition~\ref{thm_cm}. For any $\pi_{y_1},\pi_{y_2},\pi_{y_3}\in \mathcal{P}$ such that $\pi_{y_1} = \pi_{y_2} + \pi_{y_3}$, we have proved that
					\begin{align*}
						p(y_1)\mathrm{G}(X_{\pi_{y_1}}) \geq p(y_2)\mathrm{G}(X_{\pi_{y_2}}) + p(y_3)\mathrm{G}(X_{\pi_{y_3}}),
					\end{align*}
				where $p(y_j) = \sum_{i=1}^n p(x_i)\mathrm{Tr}(\rho_{x_i}\pi_{y_j})$. Since $p(y_1) = p(y_2)+p(y_3)$, it must hold either $\mathrm{G}(X_{\pi_{y_1}}) \geq \mathrm{G}(X_{\pi_{y_2}})$ or $\mathrm{G}(X_{\pi_{y_1}}) \geq \mathrm{G}(X_{\pi_{y_3}})$. Consequently, only rank-one PSD operators need to be considered in minimizing $\mathrm{G}(X_{\pi})$.

				Second, recall that $\mathrm{G}(X|Y)$ can be written as $\sum_{j=1}^n p(y_j)\mathrm{G}(X_{\pi_{y_j}})$. Provided the conditions (i) and (ii) are satisfied, we prove the optimality of the POVM $V\Pi V^{\dagger}$ by showing that $\mathrm{G}(X_{V\pi_{y_j}V^{\dagger}}) = \mathrm{G}(X_{V\pi_{0}V^{\dagger}})$ for each $1\leq j\leq n$. As the identity matrix $I$ is contained in $\mathcal{G}$, without loss of generality we assume that $U_1 = I$ and thus $\pi_{y_1} = \pi_0$. It holds that
					\begin{align*}
						\mathrm{G}(X_{V\pi_{y_j}V^{\dagger}}) & = \frac{ \sum_{i=1}^n \sigma_{j}(i) \mathrm{Tr}(\rho_{x_i}V\pi_{y_j}V^{\dagger}) }{ \sum_{i=1}^n \mathrm{Tr}(\rho_{x_i}V\pi_{y_j}V^{\dagger}) } \\
							& = \frac{ \sum_{i=1}^n \sigma_{j}(i) \mathrm{Tr}(U_i\rho_0 U_i^{\dagger}VU_j\pi_0 U_j^{\dagger}V^{\dagger}) }{ \sum_{i=1}^n \mathrm{Tr}(U_i\rho_0 U_i^{\dagger}VU_j\pi_0 U_j^{\dagger}V^{\dagger}) }\\
							& = \frac{ \sum_{i=1}^n \sigma_{j}(i) \mathrm{Tr}(U_j^{\dagger}U_i\rho_0 U_i^{\dagger}U_jV \pi_0 V^{\dagger}) }{ \sum_{i=1}^n \mathrm{Tr}(U_j^{\dagger}U_i\rho_0 U_i^{\dagger}U_j V\pi_0 V^{\dagger}) }\\
							& = \frac{ \sum_{i=1}^n \sigma_{1}(i) \mathrm{Tr}(U_i\rho_0 U_i^{\dagger}V\pi_{y_1} V^{\dagger}) }{ \sum_{i=1}^n \mathrm{Tr}(U_i\rho_0 U_i^{\dagger}V\pi_{y_1} V^{\dagger}) }\\
							& = \mathrm{G}(X_{V\pi_{y_1} V^{\dagger}}) = \mathrm{G}(X_{V\pi_0 V^{\dagger}}),
					\end{align*}
				where $\sigma_j$ is an optimal permutation with respect to $X_{V\pi_{y_j}V^{\dagger}}$ for $1\leq j\leq n$. In this reasoning, the third equation follows from the condition (i) and the fourth is due to that $\{U_j^{\dagger}U_i:1\leq i\leq n\} = \mathcal{G}$ for each $1\leq j\leq n$. Then, the condition (ii) implies the optimality of $V\Pi V^{\dagger}$.
			\end{proof}
		
			It is worth noting that the above theorem just reduces the general minimization problem into another restricted one, i.e., the term $\min_{\pi\in\mathcal{P}_1} \mathrm{G}(X_{\pi})$ in the condition (ii), which in itself is not easy to solve.
			
			Nevertheless, this theorem suffices to prove that the POVM $\Pi^{G}$ in Example~\ref{exa_trine} is optimal for the trine ensemble. It can be easily verified that the trine ensemble and the POVM $\Pi^E$ are geometrically uniform. The three states of the ensemble can be obtained from one another by a rotation $R_y(4\pi/3)$ about $y$-axis of the Bloch sphere, with $R_y(\theta)$ defined as
				\begin{align*}
					R_y(\theta) = \begin{bmatrix}
									\cos\frac{\theta}{2} & - \sin\frac{\theta}{2} \\
									\sin\frac{\theta}{2} & \cos\frac{\theta}{2}
								  \end{bmatrix},
				\end{align*}
			where $0\leq \theta <2\pi$. The POVM $\Pi^G$ can be obtained from $\Pi^E$ in a similar way: $\Pi^G = R_y(\pi/6)\Pi^E R_y(\pi/6)^{\dagger}$. Since rotations about the same axis commute, i.e., $R_y(\alpha)R_y(\beta) = R_y(\beta)R_y(\alpha)$, the condition (i) of Theorem~\ref{thm_geo} is satisfied. It remains to verify that the measurement operator $\frac{2}{3}R_y(\pi/6) |0\rangle\langle 0| R_y(\pi/6)^{\dagger}$ actually achieves $\min_{\pi\in\mathcal{P}_1} \mathrm{G}(X_{\pi})$. Let $\pi' = |w\rangle\langle w|$ be an arbitrary rank-one PSD operator on a $2$-dimensional Hilbert space, $|w\rangle$ can be given as
				\begin{align*}
					|w\rangle = \cos\alpha|0\rangle + e^{i\beta}\sin\alpha|1\rangle,
				\end{align*}
			where $0\leq \alpha,\beta < 2\pi$. To calculate the value of $\mathrm{G}(X_{\pi'})$, we need the following probabilities:
				\begin{align*}
					\mathrm{Pr}(X_{\pi'} = x_1) & = \frac{1}{3} (1+ \cos2\alpha),\\
					\mathrm{Pr}(X_{\pi'} = x_2) & = \frac{1}{6} (2- \cos2\alpha - \sqrt{3}\sin2\alpha\cos\beta), \\
					\mathrm{Pr}(X_{\pi'} = x_3) & = \frac{1}{6} (2- \cos2\alpha + \sqrt{3}\sin2\alpha\cos\beta).
				\end{align*}
			There are two cases of the value of $\mathrm{G}(X_{\pi'})$:
				\begin{align*}
					 \begin{cases}
												2 - \frac{\sqrt{3}}{3} \sin2\alpha\cos\beta, \qquad \mbox{ if } \sqrt{3}\cos2\alpha  \leq \sin2\alpha\cos\beta, \\
												2 - \frac{1}{2}\cos2\alpha - \frac{\sqrt{3}}{6}\sin2\alpha\cos\beta, \qquad  \mbox{ otherwise.}
											\end{cases}
				\end{align*}
			For the first case, $\mathrm{G}(X_{\pi'})$ achieves the minimum value $2 - \sqrt{3}/3$ with $\alpha = \pi/4$ and $\beta = 0$. For the second case, $\mathrm{G}(X_{\pi'})$ achieves the same minimum value with $\alpha = \pi/12$ and $\beta = 0$. Since $\pi' = R_y(\pi/6) |0\rangle\langle 0| R_y(\pi/6)^{\dagger}$ when $\alpha = \pi/12$ and $\beta = 0$, it follows that $\Pi^G$ is optimal for the trine ensemble.

	\section{When does making measurement provide no benefit?}
	\noindent
		\label{sec_n_m}
		So far, we have focused on the adversarial viewpoint taken by Bob aiming to minimize his guesswork. In this section, let us take the protective viewpoint of Alice, whose task is to choose an optimal encoding ensemble to maximize Bob's minimum guesswork. Similar max-min problem has been addressed in the context of MED~\cite{Hunter2003}. Formally, we are concerned with the following goal:
			\begin{align*}
				\max_{\mathcal{E}} \mathrm{G}^{opt}(\mathcal{E}).
			\end{align*}
		Since it always holds that $\mathrm{G}(X|Y) \leq \mathrm{G}(X)$~\cite{Arikan1996}, we simply know that
			\begin{align*}
				\max_{\mathcal{E}} \mathrm{G}^{opt}(\mathcal{E}) = \max_{\mathcal{E}} \min_{\Pi} \mathrm{G}(X|Y) \leq \mathrm{G}(X).
			\end{align*}
		In what follows, we show that for any classical information source $X$, Alice can always find a quantum encoding $\mathcal{E}$ which renders Bob's measurement useless in the sense that no measurement can reduce his prior guesswork, i.e., $\max_{\mathcal{E}}\mathrm{G}^{opt}(\mathcal{E}) = \mathrm{G}(X)$.

		We start from the proof of $\mathrm{G}(X|Y) \leq \mathrm{G}(X)$:
			\begin{align*}
				\mathrm{G}(X|Y) & = \sum_{j=1}^m p(y_j) \sum_{i=1}^n \sigma_j(i) p(x_i|y_j) \\
								& \leq \sum_{j=1}^m p(y_j) \sum_{i=1}^n \sigma(i) p(x_i|y_j) \\
								& = \sum_{i=1}^n \sigma(i) \sum_{j=1}^m p(x_i,y_j) \\
								& = \sum_{i=1}^n \sigma(i) p(x_i) = \mathrm{G}(X),
			\end{align*}
		where the inequality is due to the fact that substituting $\sigma$ for $\sigma_j$ can only increase (at best preserve) the posterior guesswork $\mathrm{G}(X|Y=y_j)$. Moreover, the equality is achieved if and only if each prior optimal permutation $\sigma$ is optimal with respect to any posterior distribution $\{p(x_i|y_j): 1\leq i\leq n\}$. (Since it may be the case that $p(x_i) = p(x_j)$ for some $1\leq i\not= j\leq n$, there may exist more than one optimal permutations achieving the prior minimum guesswork $\mathrm{G}(X)$.) Now, suppose that $\mathcal{E}=\{(p(x_i),\rho_{x_i})\}$ is a quantum encoding of $X$ such that $\mathrm{G}^{opt}(\mathcal{E}) = \mathrm{G}(X)$. As we need to range over all POVMs in obtaining $\mathrm{G}^{opt}(\mathcal{E})$, the aforementioned equality condition can be restated as
			\begin{align*}
				\forall & 1\leq i,j\leq n, \pi\in\mathcal{P}. \\
				& p(x_i) \geq p(x_j) \Rightarrow p(x_i)\mathrm{Tr}(\rho_{x_i}\pi) \geq p(x_j)\mathrm{Tr}(\rho_{x_j}\pi),
			\end{align*}
		which is equivalent to
			\begin{align*}
				\forall 1\leq i,j\leq n. \quad p(x_i) \geq p(x_j) \Rightarrow p(x_i)\rho_{x_i} \geq p(x_j)\rho_{x_j}.
			\end{align*}
		We formalize our result as the following theorem.
			\vspace*{12pt}
			\begin{mythm}
				\label{thm_n_m}
				Let $\mathcal{E} = \{(p(x_i),\rho_{x_i}): 1\leq i\leq n\}$ be a quantum encoding of a random variable $X$. Let $\mathrm{G}(X)$ and $\mathrm{G}^{opt}(\mathcal{E})$ be defined in Eq.(\ref{eq_def_guesswork}) and Eq.(\ref{eq_def_min_guesswork}), respectively. Then $\mathrm{G}^{opt}(\mathcal{E}) = \mathrm{G}(X)$ holds if and only if for any $1\leq i,j\leq n$ the following condition holds:
					\begin{align*}
						p(x_i) \geq p(x_j) \Rightarrow p(x_i)\rho_{x_i} \geq p(x_j)\rho_{x_j}.
					\end{align*}
				Consequently, it holds that 
						$\max_{\mathcal{E}} \mathrm{G}^{opt}(\mathcal{E}) = \mathrm{G}(X)$ 
				for any random variable $X$.
			\end{mythm}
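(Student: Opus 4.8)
The plan is to recognize that both directions of the biconditional are essentially contained in the paragraph preceding the theorem; the proof's job is to assemble those pieces cleanly and then to settle the final ``Consequently'' clause via a trivial witness encoding.

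First I would reduce $\mathrm{G}^{opt}(\mathcal{E})=\mathrm{G}(X)$ to a statement about \emph{every} POVM rather than just an optimal one: since Arikan's inequality gives $\mathrm{G}(X|\Pi)\le\mathrm{G}(X)$ for all $\Pi$, the minimum over POVMs equals $\mathrm{G}(X)$ if and only if $\mathrm{G}(X|\Pi)=\mathrm{G}(X)$ for all $\Pi$. Then I would re-examine the three-line derivation of $\mathrm{G}(X|Y)\le\mathrm{G}(X)$ given above: the sole inequality there is the replacement of each posterior-optimal permutation $\sigma_j$ by a prior-optimal $\sigma$, and since each $p(y_j)\ge 0$, equality forces $\sum_i\sigma(i)p(x_i\mid y_j)=\mathrm{G}(X\mid Y=y_j)$ for every $j$ with $p(y_j)>0$, i.e. the prior-optimal $\sigma$ must also be posterior-optimal at each observed $y_j$. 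Spelling out what it means for a prior-optimal permutation to be posterior-optimal and demanding it for all POVMs, I arrive at: for all $i,j$ and all $\pi\in\mathcal{P}$, $p(x_i)\ge p(x_j)\Rightarrow p(x_i)\mathrm{Tr}(\rho_{x_i}\pi)\ge p(x_j)\mathrm{Tr}(\rho_{x_j}\pi)$. Here I use that, after rescaling so that $\pi\le I$, every non-zero PSD operator arises as a measurement operator of the two-outcome POVM $\{\pi, I-\pi\}$, with $X_{\pi}$ unchanged by the scaling, so quantifying over POVMs is the same as quantifying over $\mathcal{P}$.

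The one step that is not pure bookkeeping is converting this family of scalar inequalities into the operator inequality $p(x_i)\rho_{x_i}\ge p(x_j)\rho_{x_j}$. I would invoke the self-duality of the PSD cone: a Hermitian $C$ is PSD if and only if $\mathrm{Tr}(C\pi)\ge 0$ for all $\pi\in\mathcal{P}$ (one direction is immediate; the other follows by testing $C$ against $\pi=|v\rangle\langle v|$ for an eigenvector $v$ of a would-be negative eigenvalue). Taking $C=p(x_i)\rho_{x_i}-p(x_j)\rho_{x_j}$ closes the biconditional, and the chain of equivalences is symmetric enough that no separate reverse argument is needed. I expect the only place demanding care is the handling of ties: when several $p(x_i)$ coincide there are several prior-optimal permutations, and one must check that $\mathrm{G}(X|Y)=\mathrm{G}(X)$ forces all of them to be posterior-optimal — which is exactly why the criterion is the symmetric implication $p(x_i)\ge p(x_j)\Rightarrow p(x_i)\rho_{x_i}\ge p(x_j)\rho_{x_j}$ (applied in both directions when $p(x_i)=p(x_j)$), rather than mere order preservation.

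Finally, for $\max_{\mathcal{E}}\mathrm{G}^{opt}(\mathcal{E})=\mathrm{G}(X)$: the inequality $\le$ is already recorded (again from $\mathrm{G}(X|Y)\le\mathrm{G}(X)$), and for $\ge$ I would simply take the degenerate encoding with $\rho_{x_i}=\rho$ for all $i$ and a fixed density operator $\rho$; then $p(x_i)\ge p(x_j)$ trivially yields $p(x_i)\rho\ge p(x_j)\rho$, so this $\mathcal{E}$ satisfies the criterion, hence $\mathrm{G}^{opt}(\mathcal{E})=\mathrm{G}(X)$, and the supremum is attained.
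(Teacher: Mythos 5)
Your proposal is correct and follows essentially the same route as the paper, whose proof is the discussion preceding the theorem: reduce equality of the minimum to equality for every POVM, extract from Arikan's inequality that every prior-optimal permutation must be posterior-optimal (including the tie case), restate this as the scalar condition $p(x_i)\mathrm{Tr}(\rho_{x_i}\pi)\geq p(x_j)\mathrm{Tr}(\rho_{x_j}\pi)$ for all $\pi\in\mathcal{P}$, pass to the operator inequality, and witness the maximum with an identical-states encoding. You merely make explicit two steps the paper leaves implicit (embedding any rescaled $\pi$ in a two-outcome POVM, and the PSD-cone duality argument), which is a faithful filling-in rather than a different approach.
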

			\vspace*{12pt}
		A direct corollary of the theorem is that, for a uniformly distributed variable $X$, a quantum encoding $\mathcal{E}$ achieving $\mathrm{G}^{opt}(\mathcal{E}) = \mathrm{G}(X)$ must be formed by identical states.

	\section{Discussion}
	\noindent
		\label{sec_discussion}
				By Proposition~\ref{prop_n!}, we have shown that the optimal POVM $\Pi$ achieving minimum guesswork can always be taken as an $n!$-component measurement when the ensemble $\mathcal{E}$ comprises $n$ states. Still, better upper bound on $|\Pi|$ may be possible, especially for cases where the quantum states have certain type of symmetry. Recall that in Example~\ref{exa_trine} we have found the optimal POVM $\Pi^G$, which consists of three elements, for the trine ensemble. 


				In Section~\ref{sec_bound}, we combined information-theoretic bounds on guesswork $\mathrm{G}(X|Y)$ in classic setting and bounds on accessible information $\mathrm{I}_{acc}(\mathcal{E})$ in the quantum setting to generate bounds on minimum guesswork $\mathrm{G}^{opt}(\mathcal{E})$. There are alternative approaches to deriving bounds on minimum guesswork. Recall that in Section~\ref{sec_g_e} we bounded $\mathrm{G}^{opt}(\mathcal{E})$ in terms of $\mathrm{P}^{opt}_{err}(\mathcal{E})$ in both directions (see Eq.(\ref{eq_thm_gp})). Thus, we can substitute existing bounds on minimum error probability~\cite{Barnum2002,Montanaro2007,Montanaro2008,Qiu2008,QiuL2010} for $\mathrm{P}^{opt}_{err}(\mathcal{E})$ in Eq.(\ref{eq_thm_gp}) to generate bounds on $\mathrm{G}^{opt}(\mathcal{E})$. Also, we can relate $\mathrm{G}^{opt}(\mathcal{E})$ to the inconclusive probability $p_{inc}$ of an unambiguous discrimination scheme (if applicable) in the following way:
					\begin{align*}
						\mathrm{G}^{opt}(\mathcal{E}) \leq (1-p_{inc}) + \frac{n+1}{2} p_{inc} = 1 + \frac{n-1}{2} p_{inc},
					\end{align*}
				where $n$ is the number of states in $\mathcal{E}$. This inequality follows from the fact that $\mathrm{G}(X)\leq (n+1)/2$~\cite{Santis2001}. Then, substituting the existing upper bound on $p_{inc}$~\cite{Duan1998,Sun2002} for $p_{inc}$ in the above inequality, we obtain an upper bound on $\mathrm{G}^{opt}(\mathcal{E})$. Comparing the performance of various bounds yielded by these approaches is a subject for future work. 
	
	\section{Conclusion}
	\noindent
		\label{sec_conclusion}
		In this work, with the concern of brute-force adversarial strategy, we reexamined the problem of quantum state discrimination by adopting guesswork~\cite{Massey1994,Arikan1996}, rather than the widely accepted and well-studied error probability~\cite{Helstrom1976}, as the optimization criterion. The new problem, named quantum guesswork discrimination, can thus be viewed as a sibling of minimum error discrimination. Following the approach in~\cite{EldarMV2003}, we reduced the new optimization problem to a semidefinite programming problem. Necessary and sufficient conditions which must be satisfied by the optimal POVM to achieve minimum guesswork were also presented. Then we investigated the relation between minimum guesswork and minimum error probability, showing that the former can be bounded in terms of the latter in both directions. Additionally, the general disagreement between the two criteria was illustrated by the trine ensemble. Combining Massey's and McEliece and Yu's classical bounds on guesswork~\cite{Massey1994,McelieceY1995} with two elegant bounds on accessible information~\cite{Holevo1973b,JozsaRW1994}, we obtained both upper and lower bounds on minimum guesswork. Other approaches to deriving bounds were also discussed in Section~\ref{sec_discussion}. For geometrically uniform quantum states~\cite{EldarF2001}, we gave sufficient conditions for a geometrically uniform POVM to achieve minimum guesswork. Using this result, we proved the optimality of the POVM $\Pi^G$ (see Example~\ref{exa_trine}) for the trine ensemble. Furthermore, inspired by a similar result in minimum error discrimination~\cite{Hunter2003}, we provided the necessary and sufficient condition under which making no measurement at all would be the optimal strategy.

	\section*{Acknowledgements}
		\noindent
		The work is supported by the Australian Research Council (Grant Nos. DP130102764 and FT100100218) and the National Natural Science Foundation of China (Grant Nos. 61170299 and 61370053). Y. Feng is also supported by the Overseas Team Program of the Academy of Mathematics and Systems Science, CAS and the CAS/SAFEA International Partnership Program for Creative Research Team.
	
		\noindent
		\setlength{\bibspacing}{\baselineskip}
		\bibliographystyle{unsrt}
		\bibliography{MGD}


	
\end{document}